\documentclass[a4paper,10pt,eqno]{article}
\usepackage{latexsym,amssymb,amsfonts,amsmath}
\usepackage{amsthm}
\usepackage[dvips]{graphicx}
\setlength{\evensidemargin}{-3mm}
\setlength{\oddsidemargin}{-3mm}
\setlength{\topmargin}{-25mm}
\setlength{\textheight}{260mm}
\setlength{\textwidth}{165mm}

\newtheorem{thm}{Theorem}[section]
\newtheorem*{thm*}{Theorem}
\newtheorem{lem}[thm]{Lemma}

\newtheorem{pro}[thm]{Proposition}


\newcommand{\RM}{\mathbb{R}}
\newcommand{\ZM}{\mathbb{Z}}

\newcommand{\CM}{\mathbb{C}}

\newcommand{\HM}{\mathbb{H}}
\newcommand{\ket}[1]{|#1\rangle}
\newcommand{\bra}[1]{\langle #1|}
\newcommand{\ol}[1]{\overline{#1}}


\title{{\Large {\bf Probability distributions and weak limit theorems of quaternionic quantum walks in one dimension}}}
\author{
{\small Kei Saito\footnote{saito-kei-nb@ynu.jp},}\\
{\scriptsize  Department of Applied Mathematics, Faculty of Engineering, Yokohama National University}\\
{\scriptsize \footnotesize\it 79-5 Tokiwadai, Hodogaya, Yokohama, 240-8501, Japan}\\
}

\date{\empty}
\pagestyle{plain}

\begin{document}
\maketitle

\par\noindent
\begin{small}
\par\noindent
{\bf Abstract}. The discrete-time quantum walk (QW) is determined by a unitary matrix whose component is complex number.
Konno (2015) extended the QW to a walk whose component is quaternion.We call this model quaternionic quantum walk (QQW).
The probability distribution of a class of QQWs is the same as that of the QW. On the other hand, a numerical simulation suggests that the probability distribution of a QQW is different from the QW.
In this paper, we clarify the difference between the QQW and the QW by weak limit theorems for a class of QQWs.

\footnote[0]{
{\it Abbr. title:} Probability distributions and weak limit theorems of quaternionic quantum walks in one dimension
}
\footnote[0]{
{\it AMS 2000 subject classifications: }
60F05, 81P68
}
\footnote[0]{
{\it Keywords: } 
Quantum walks, Quaternionic quantum walks, Quaternion, Probability distribution, Limit distribution
}
\end{small}

\setcounter{equation}{0}

\section{Introduction}
\label{Introduction}
The quantum walk (QW) is a quantum dynamics defined as a quantization of the classical random walks. The study of quantum walks has recently begun to attract to the concern of various research fields such as information science and quantum physics. Moreover QW is powerful method for developing new quantum algorithms and protocols. Especially the discrete-time QW on the one-dimensional lattice is largely investigated and proposed to some kinds of models. As a remarkable property for a class of QWs, the quantum walker has both properties staying at the starting position and spreading quadratically faster than classical random walker.

The quaternionic quantum walk (QQW) on the one-dimensional lattice is introduced by Konno \cite{KonnoQQW} as a natural quaternionic extension of QW. Konno, Mitsuhashi, and Sato \cite{MitsuhashiQQW1, MitsuhashiQQW2, MitsuhashiQQW3} studied some properties about the spectrum of QQW on some graphs. Both QQWs are defined by extending complex components of the unitary matrix which governs the dynamics of corresponding QW to quaternion components.
The present paper treats only QQW on the one-dimensional lattice whose detailed definition is given in Sect. 2.
We have a concern for the probability distribution of the QQW, and obtain the concrete formulation of the distribution for some cases of the QQW.
Our results show the probability distribution of QQWs which belong to Cases 1 to 4 including a QQW introduced as an example in \cite{KonnoQQW} has the same formulation of that of the QW. However, in general case, the formulation doesn't always correspond to that of the QW. For instance, a numerical simulation suggests that the probability distribution of a QQW is shaped by the superposition of some distributions (see Fig. 1). Moreover in the QW, Konno \cite{KonnoQW2} showed the support of range of limit density function is determined by the modulus of a component of unitary matrix which is called coin operator. In contrast, our results show this support of QQWs belonging to Case 5 is not determined by only the modulus of a component (see Fig. 2). 

\begin{figure}[hbt]
\begin{tabular}{cc}
\begin{minipage}[t]{.45\textwidth}
\centering
\includegraphics[width=70mm,keepaspectratio,clip]{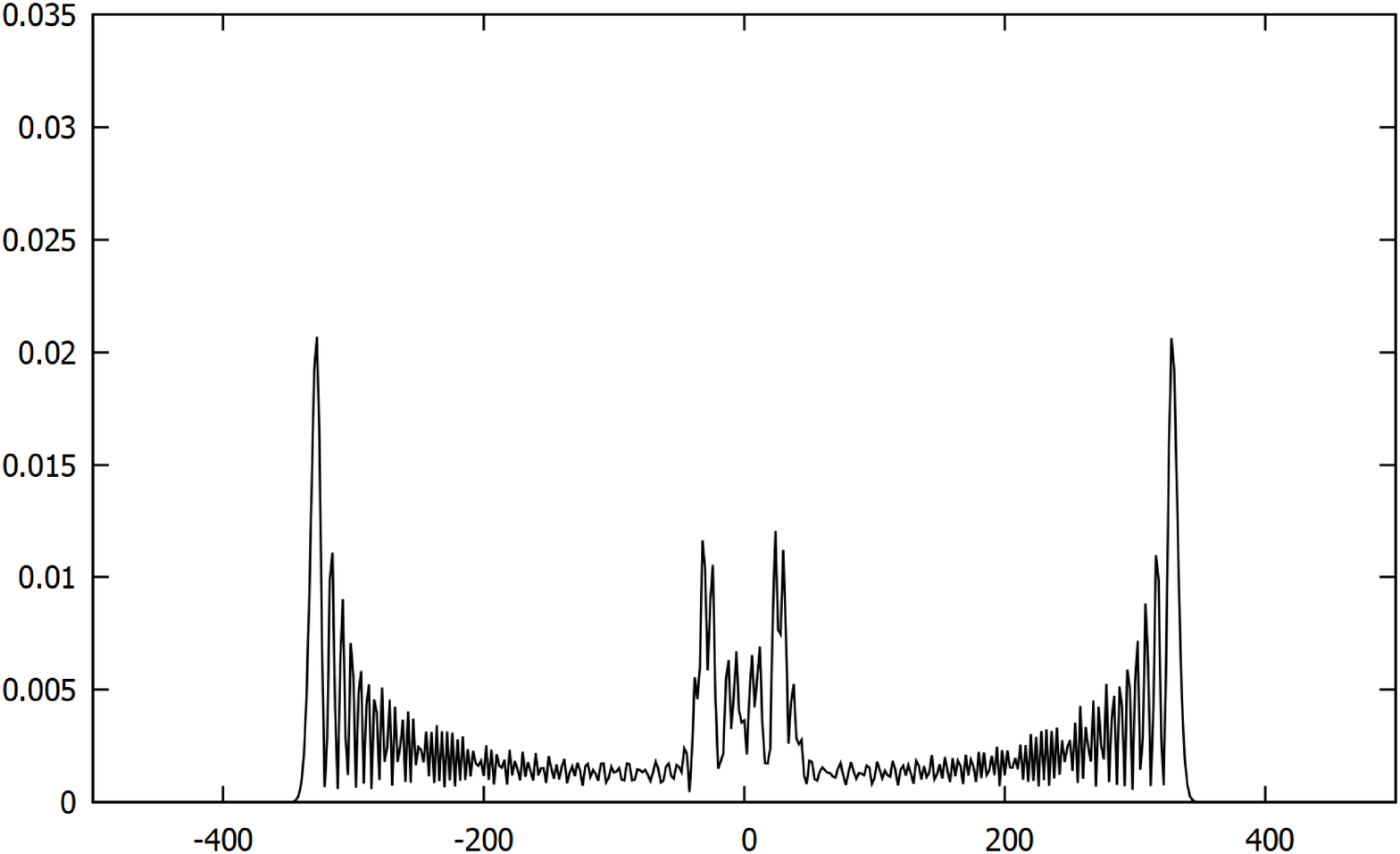}
\caption{A QQW whose distribution is given by a superposition of distributions.}
\end{minipage}&
\begin{minipage}[t]{.45\textwidth}
\centering
\includegraphics[width=70mm,keepaspectratio,clip]{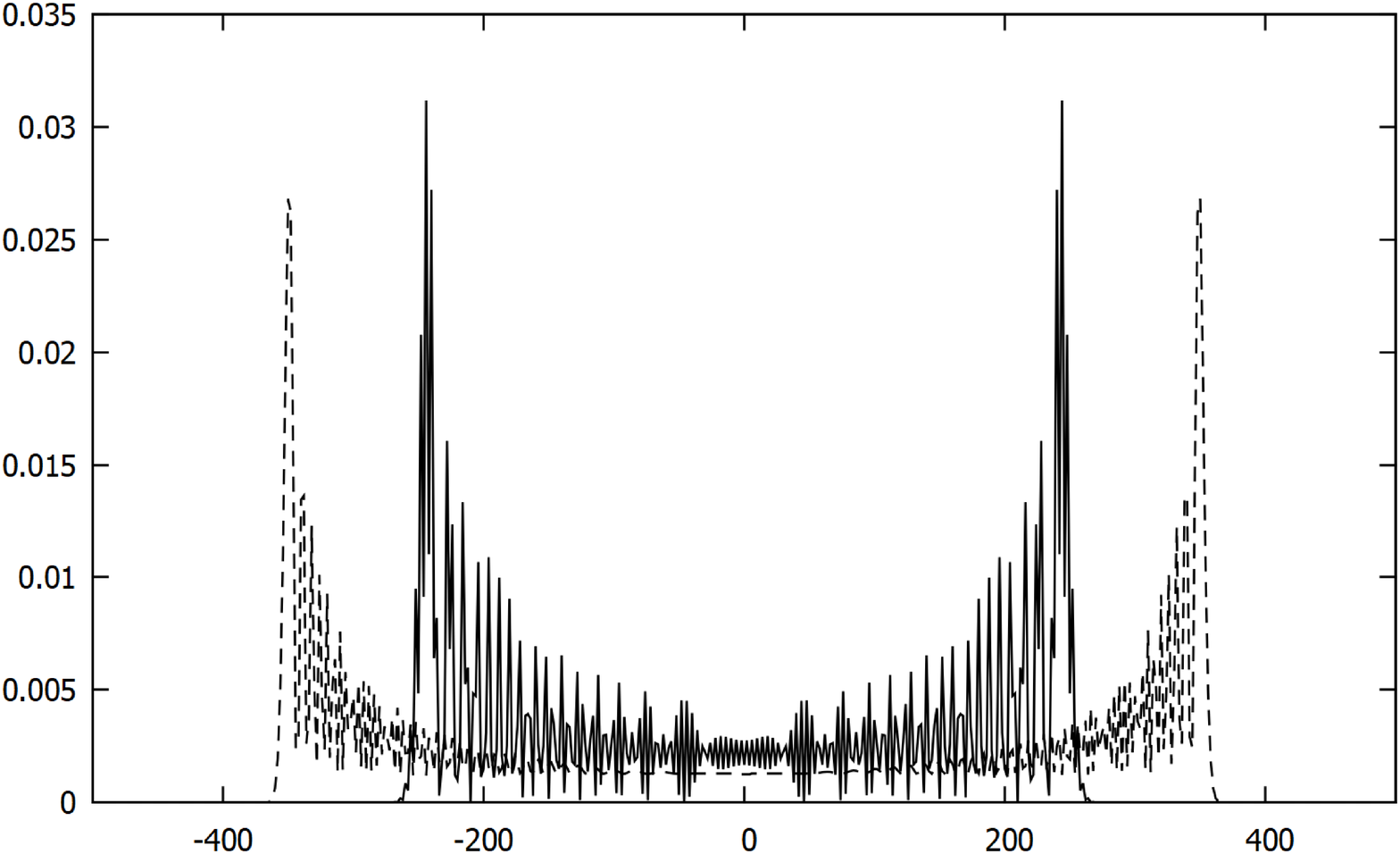}
\caption{Hadamard QW (dotted line) and a QQW of Case 5 (solid line). Both coin operators of the walks have same parameters by modulus. }
\end{minipage}
\end{tabular}
\end{figure}

The rest of the present paper is organized as follows. Some notations and definitions used in this paper are given in Sect. 2. Furthermore the detailed definition of QQWs is also introduced in this section. In Sect. 3, we show the formulations of the probability distribution of QQWs for Case 1 to 4 are the same as that of QWs. By considering the weak limit theorem, Sect. 4 presents a class of QQW whose behaviour is different from the QW.
The limit density function of this class is an extension of that of QW.

\section{Definitons and models}
\subsection{Quaternion}
The quaternion was introduced as an extension of the complex number by Hamilton in 1843. 
Let $\RM$, $\CM$ and $\HM$ be the set of real, complex numbers and quaternions, respectively.  
Then $x\in\HM$ is expressed as follows. 
\begin{align*}
x=x_0+x_1i+x_2j+x_3k\in \HM,
\end{align*}
\vspace{2mm}
where $x_0, x_1, x_2, x_3 \in \RM$ and, 
\begin{align*}
i^2 &= j^2 = k^2 = -1,
\\
ij & = -ji = k, \quad jk= -kj =i, \quad ki=-ik = j.
\end{align*}
Hence $\HM$ is noncommutative algebra. For $x=x_0+x_1i+x_2j+x_3k\in\HM$ ($x_0,x_1,x_2,x_3\in\RM$), let $\ol{x}$ be the conjugate of $x$ whose form is given by
\begin{align*}
\overline{x} = x_0-x_1i-x_2j-x_3k.
\end{align*}
Moreover a modulus of $x$ is
\begin{align*}
|x| = \sqrt{x\ol{x}} = \sqrt{\ol{x}x} = \sqrt{x_0^2+x_1^2+x_2^2+x_3^2}.
\end{align*}
Let $\mbox{\boldmath{M}}(n, \CM)$ and $\mbox{\boldmath{M}}(n, \HM)$ be the set of all $n \times n$ matrices with complex and quaternion components, respectively. 
${}^T\!A$  denotes the transpose of $A$. For $A=(a_{st}) \in \mbox{\boldmath{M}}(n, \HM)$, we put $\ol{A}= (\ol{a}_{st})$ and $A^{*} ={}^T\!\left(\ol{A}\,\right)$. 
As with the complex components, $A$ is a unitary matrix, if $AA^{*}= A^{*}A=I$, where $I$ is the identity matrix. 
Let $\mbox{\boldmath{U}}(n, \CM)$ and $\mbox{\boldmath{U}}(n, \HM)$ be the set of all $n \times n$ unitary matrices with complex and quaternionic components, respectively. Moreover we divide $x$ into the real part as $\Re(x) = x_0$, and the imaginary part  as $\Im(x)= x_1i+x_2j+x_3k$. For another expression, $x$ is uniquely expressed as follows;
\begin{align*}
x=x'+x''j\in\HM
\qquad
(x'=x_0+x_1i,\ x''=x_2+x_3i\in\CM).
\end{align*}
Here $x'$ and $x''$ are called simplex and perplex parts, respectively. 
Furthermore we define the mapping $\chi:\mbox{M}(n,{\HM}) \to \mbox{M}(2n,{\CM})$ by
\begin{align*}
&\chi(x)=
\begin{bmatrix}
x' & -x''\\
\ol{x''} & \ol{x'}
\end{bmatrix}\in\mbox{M}(2,\CM).
\end{align*}
We remark that, for $A=(a_{st})\in\mbox{M}(n,\HM)$, we define $\chi(A)=(\chi(a_{st}))\in\mbox{M}(2n,\CM)$. For example 
\begin{align*}
U=\begin{bmatrix}a & b \\ c &  d \end{bmatrix}\in\mbox{M}(2,\HM), \quad \chi(U)=\begin{bmatrix} \chi(a) & \chi(b) \\ \chi(c) & \chi(d) \end{bmatrix}\in\mbox{M}(4,\CM).
\end{align*}
Then the following relations hold.
\begin{lem}
For $A,B\in\mathrm{M}(n,\HM)$,
\begin{align*}
&(1)\quad\chi(aA)=a\chi(A)\quad(a\in\RM),\\
&(2)\quad\chi(A)\chi(B)=\chi(AB),\\
&(3)\quad\chi(A+B)=\chi(A)+\chi(B).
\end{align*}
\end{lem}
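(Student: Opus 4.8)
The plan is to reduce all three assertions to the case $n=1$, i.e.\ to statements about the scalar map $\chi:\HM\to\mathrm{M}(2,\CM)$, and then to observe that the blockwise definition of $\chi$ on matrices transports them automatically. Assertions (1) and (3) are essentially immediate: writing $x=x'+x''j$ with $x',x''\in\CM$, the simplex and perplex parts are $\RM$-linear in $x$ (for $a\in\RM$ one uses that $a$ commutes with $i$, so $(ax)'=ax'$ and $(ax)''=ax''$), hence $\chi(x+y)=\chi(x)+\chi(y)$ and $\chi(ax)=a\chi(x)$ follow directly from the formula for $\chi$. Passing to $\mathrm{M}(n,\HM)$, both identities hold entrywise, hence blockwise.

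The real content is the multiplicativity (2), and it suffices to prove $\chi(xy)=\chi(x)\chi(y)$ for $x,y\in\HM$. First I would record the two elementary relations governing the noncommutativity: from $ji=-ij$ one gets $jz=\ol{z}\,j$ for every $z\in\CM$, and $j^{2}=-1$. Writing $x=x'+x''j$ and $y=y'+y''j$ with $x',x'',y',y''\in\CM$, one then expands
\begin{align*}
xy&=(x'+x''j)(y'+y''j)=x'y'+x'y''j+x''jy'+x''jy''j\\
&=(x'y'-x''\ol{y''})+(x'y''+x''\ol{y'})j,
\end{align*}
so that the simplex part of $xy$ is $x'y'-x''\ol{y''}$ and the perplex part is $x'y''+x''\ol{y'}$.

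It then remains to compare the two $2\times 2$ complex matrices: $\chi(xy)$, read off from the simplex/perplex parts just computed, and the product
\begin{align*}
\chi(x)\chi(y)=
\begin{bmatrix} x' & -x''\\ \ol{x''} & \ol{x'}\end{bmatrix}
\begin{bmatrix} y' & -y''\\ \ol{y''} & \ol{y'}\end{bmatrix}.
\end{align*}
Multiplying out, the $(1,1)$ and $(1,2)$ entries are visibly $x'y'-x''\ol{y''}$ and $-(x'y''+x''\ol{y'})$, while a one-line conjugation check (using $\ol{z_1 z_2}=\ol{z_1}\,\ol{z_2}$ in $\CM$) shows the $(2,1)$ and $(2,2)$ entries equal $\ol{x'y''+x''\ol{y'}}$ and $\ol{x'y'-x''\ol{y''}}$, which is precisely $\chi(xy)$. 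Finally, for $A=(a_{st}),B=(b_{st})\in\mathrm{M}(n,\HM)$ the $(s,t)$ block of $\chi(AB)$ is $\chi\!\big(\sum_{u} a_{su}b_{ut}\big)=\sum_{u}\chi(a_{su})\chi(b_{ut})$ by (3) together with the scalar case, and this is exactly the $(s,t)$ block of $\chi(A)\chi(B)$. The only delicate point anywhere in the argument is keeping track of the conjugations forced by $jz=\ol{z}\,j$; there is no genuine obstacle beyond this bookkeeping.
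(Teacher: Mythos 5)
Your proof is correct. The paper states this lemma without any proof (it is the standard fact that $\chi$ embeds $\HM$ as a real subalgebra of $\mathrm{M}(2,\CM)$), so there is nothing to compare against; your reduction to the scalar case via $jz=\ol{z}\,j$ and $j^2=-1$, the explicit check that the simplex and perplex parts of $xy$ match the entries of $\chi(x)\chi(y)$, and the blockwise passage to $\mathrm{M}(n,\HM)$ are exactly the verification one would supply, and each step checks out.
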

\subsection{QQW}
The QQW on $\ZM$ is determined by the unitary matrix $U\in\mbox{U}(2,\HM)$ which is called coin operator, where $\ZM$ is the set of integers.
For the QW, the component of the matrix is a complex number. On the other hand, the component is a quaternion in the QQW. The walker of QQW has two chiralities, left and right, corresponding to the direction of the motion. Then we adapt each chirality to the vector $\ket{L} ={}^T\begin{bmatrix}1 & 0\end{bmatrix}$ and $\ket{R} ={}^T\begin{bmatrix}0 & 1\end{bmatrix}$, where $L$ and $R$ refer to the left and right chirality states, respectively. Let the coin operator $U\in\mbox{U}(2,\HM)$ be
\begin{align*}
U = \begin{bmatrix}
a & b \\
c & d 
\end{bmatrix}\in\mbox{U}(2,\HM).
\end{align*}
Unitarity of $U$ gives
\begin{lem}
\label{unitary}
\begin{align*}
&|a|^2+|b|^2=
|c|^2+|d|^2=1,\quad
a\overline{c}+b\overline{d}=
\overline{a}b+\overline{c}d=0,\quad
|a|^2=|d|^2,\quad
|b|^2=|c|^2.
\end{align*}
\end{lem}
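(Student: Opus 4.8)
The plan is to unravel the two identities $UU^{*}=I$ and $U^{*}U=I$ entrywise, read off the scalar relations among $a,b,c,d$, and then combine them. Since $A^{*}={}^{T}\!\left(\ol{A}\,\right)$, we have
\begin{align*}
U^{*}=\begin{bmatrix}\ol{a} & \ol{c}\\[2pt] \ol{b} & \ol{d}\end{bmatrix}.
\end{align*}
Throughout I would use only the elementary quaternionic facts recalled in Sect.~2: that $x\ol{x}=\ol{x}x=|x|^{2}\in\RM$ for every $x\in\HM$, and that conjugation is an anti-automorphism, so that $\ol{xy}=\ol{y}\,\ol{x}$.

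First I would compute $UU^{*}$. Its diagonal entries are $a\ol{a}+b\ol{b}=|a|^{2}+|b|^{2}$ and $c\ol{c}+d\ol{d}=|c|^{2}+|d|^{2}$, and its $(1,2)$ entry is $a\ol{c}+b\ol{d}$; the $(2,1)$ entry equals $c\ol{a}+d\ol{b}=\ol{a\ol{c}+b\ol{d}}$, hence vanishes exactly when the $(1,2)$ entry does. So $UU^{*}=I$ gives $|a|^{2}+|b|^{2}=|c|^{2}+|d|^{2}=1$ and $a\ol{c}+b\ol{d}=0$. Symmetrically, $U^{*}U$ has diagonal entries $\ol{a}a+\ol{c}c=|a|^{2}+|c|^{2}$ and $\ol{b}b+\ol{d}d=|b|^{2}+|d|^{2}$ and $(1,2)$ entry $\ol{a}b+\ol{c}d$ (with $(2,1)$ entry its conjugate), so $U^{*}U=I$ gives $|a|^{2}+|c|^{2}=|b|^{2}+|d|^{2}=1$ and $\ol{a}b+\ol{c}d=0$.

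It then remains only to combine these. Subtracting $|a|^{2}+|c|^{2}=1$ from $|a|^{2}+|b|^{2}=1$ yields $|b|^{2}=|c|^{2}$, and subtracting $|b|^{2}+|d|^{2}=1$ from $|a|^{2}+|b|^{2}=1$ yields $|a|^{2}=|d|^{2}$, giving all the asserted relations. The argument is essentially routine; the only point needing a little care — and the one place where the familiar complex-number computation must be adjusted — is noncommutativity, which forces us to keep the order of factors in each product and to use $x\ol{x}=\ol{x}x$ rather than rearranging terms prematurely. In particular I would not try to deduce $U^{*}U=I$ from $UU^{*}=I$ alone, but would use both halves of the defining hypothesis $UU^{*}=U^{*}U=I$, both of which are genuinely needed to obtain the four relations.
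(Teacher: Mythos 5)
Your proof is correct and is exactly the direct entrywise expansion of $UU^{*}=U^{*}U=I$ that the paper implicitly relies on (the paper states the lemma without proof, as an immediate consequence of unitarity). Your care about noncommutativity and about using both halves of the definition is appropriate, and the combination step yielding $|a|^{2}=|d|^{2}$ and $|b|^{2}=|c|^{2}$ is exactly right.
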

\noindent Furthermore we divide $U$ into two matrices, $P$ and $Q$ defined by
\begin{align*}
P = \begin{bmatrix}
a & b \\
0 & 0 \end{bmatrix} ,\quad
Q = \begin{bmatrix}
0 & 0 \\
c & d \end{bmatrix}.
\end{align*}
$P$ and $Q$ represent that the walker moves to the left and right, respectively. Then the evolution of the quaternion version amplitude on position $x$ at time $n$, 
$\Psi_n(x)=
{}^T\!\begin{bmatrix} \Psi_{n}^L(x) & \Psi_{n}^R(x)
\end{bmatrix}\in\HM^2$, is defined by
\begin{align*}
\Psi_{n+1}(x)=P \Psi_n(x+1)+Q \Psi_n(x-1).
\end{align*}
That is,
\begin{align*}
\begin{bmatrix}
\Psi_{n+1}^L(x)\\
\Psi_{n+1}^R(x)
\end{bmatrix}
 = 
\begin{bmatrix}
a\Psi_n^L(x+1)+b\Psi_n^R(x+1)\\
c\Psi_n^L(x-1)+d\Psi_n^R(x-1)
\end{bmatrix}.
\end{align*}
The probability that the walker $X_n$ exists on position $x$ at time $n$ is defined by $||\Psi_n(x)||^2=\Psi_n^*(x)\Psi_n(x)$.
In this paper, we treat the model starting from only the origin. Hence we put the initial state $\Psi_0(x)\in\HM^2$ as
\begin{align*}
\Psi_0(x)=
\delta_0(x)\begin{bmatrix} \alpha \\ \beta \end{bmatrix} \quad(x\in\ZM),
\end{align*}
with $\alpha,\beta\in\HM$ and $|\alpha|^2+|\beta|^2=1$. Here, Kronecker's delta $\delta_0(x)$ equals to $1$ if $x=0$, equals to $0$ otherwise.

\subsection{Fourier transform}
We use the Fourier transform $\hat{\Phi}_n(\theta)\ (\theta\in[0,2\pi))$ given by
\begin{align*}
\hat{\Phi}_n(\theta)=\sum_{x\in\ZM}e^{-i\theta x}\Phi_n(x),
\end{align*}
where $\Phi_n(x)\in\CM^4$ means the first column of $\chi(\Psi_n(x))$. In other word,
\begin{align*}
\Phi_n(x)=\chi(\Psi_n(x))
\begin{bmatrix}
1 \\ 0
\end{bmatrix}.
\end{align*}
By the inverse Fourier transform, we have
\begin{align*}
\Phi_n(x)=\int^{\pi}_{-\pi}e^{i\theta x}\hat{\Phi}_n(\theta)\frac{d\theta}{2\pi}.
\end{align*}
According to the definition of $\Phi_n(x)$ and QQW, we get the following important relations.
\begin{lem}
\label{PhiProb}
\begin{align*}
&(1)\quad P(X_n=x)=||\Phi_n(x)||^2.\\
&(2)\quad \Phi_{n+1}(x)=\chi(P)\Phi_{n}(x+1)+\chi(Q)\Phi_{n}(x-1).
\end{align*}
\end{lem}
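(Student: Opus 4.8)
The plan is to establish both parts by direct computation, reducing everything to the action of $\chi$ on a single quaternion together with its simplex/perplex decomposition; no heavy machinery is needed.

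For part (1), I would first record the elementary fact that if $x = x' + x''j$ with $x', x'' \in \CM$, then the first column of $\chi(x)$ is ${}^T\!\begin{bmatrix} x' & \ol{x''}\end{bmatrix}$, whose squared Euclidean norm is $|x'|^2 + |\ol{x''}|^2 = |x'|^2 + |x''|^2 = |x|^2$; the last equality is just the definition of the modulus rewritten in simplex/perplex coordinates, since $|x'|^2 = x_0^2 + x_1^2$ and $|x''|^2 = x_2^2 + x_3^2$. Now write $\Psi_n(x) = {}^T\!\begin{bmatrix}\Psi_n^L(x) & \Psi_n^R(x)\end{bmatrix}$; by definition $\chi(\Psi_n(x))$ is the $4\times 2$ complex matrix with $\chi(\Psi_n^L(x))$ stacked above $\chi(\Psi_n^R(x))$, so $\Phi_n(x) = \chi(\Psi_n(x))\begin{bmatrix}1\\0\end{bmatrix}$ is obtained by stacking the first columns of $\chi(\Psi_n^L(x))$ and $\chi(\Psi_n^R(x))$. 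Hence $\|\Phi_n(x)\|^2$ is the sum of the squared norms of those two first columns, which by the elementary fact equals $|\Psi_n^L(x)|^2 + |\Psi_n^R(x)|^2 = \Psi_n^*(x)\Psi_n(x) = \|\Psi_n(x)\|^2$. Since $P(X_n = x) = \|\Psi_n(x)\|^2$ by definition, part (1) follows.

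For part (2), I would apply $\chi$ to the one-step evolution $\Psi_{n+1}(x) = P\,\Psi_n(x+1) + Q\,\Psi_n(x-1)$ of the QQW. Using additivity and multiplicativity of $\chi$ from Lemma~2.1, this gives $\chi(\Psi_{n+1}(x)) = \chi(P)\chi(\Psi_n(x+1)) + \chi(Q)\chi(\Psi_n(x-1))$; multiplying both sides on the right by $\begin{bmatrix}1\\0\end{bmatrix}$, invoking associativity of matrix multiplication, and recalling the definition $\Phi_n(\cdot) = \chi(\Psi_n(\cdot))\begin{bmatrix}1\\0\end{bmatrix}$, yields exactly $\Phi_{n+1}(x) = \chi(P)\Phi_n(x+1) + \chi(Q)\Phi_n(x-1)$.

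I do not anticipate a genuine obstacle; the only point deserving a word of care is that Lemma~2.1 is stated for square quaternionic matrices, whereas in part (2) it is applied with the right-hand factor the $2\times1$ column $\Psi_n(x\pm1) \in \HM^2$. This is harmless: the block-by-block verification of $\chi(AB)=\chi(A)\chi(B)$ goes through verbatim for rectangular matrices, or one may pad the column to a $2\times 2$ matrix with an added zero column, apply the lemma, and then read off the first column. Everything remaining is routine bookkeeping with the $2\times2$ complex blocks $\chi(a),\chi(b),\chi(c),\chi(d)$.
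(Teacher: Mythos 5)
Your proof is correct; the paper offers no proof of this lemma at all (it simply asserts that both relations follow from the definitions of $\Phi_n(x)$ and the QQW), and your computation is exactly the natural verification being implicitly invoked. The one point you flag --- that $\chi(AB)=\chi(A)\chi(B)$ must be applied with a $2\times 1$ quaternionic column on the right --- is handled correctly, since the identity reduces blockwise to $\chi(xy)=\chi(x)\chi(y)$ for single quaternions and so holds for compatible rectangular shapes.
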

Remark that Lemma 2.3 implies that the QQW is essentially equivalent to the corresponding 4-state QW \cite{QWMemo1, QWMemo2}. Let us define $U(\theta)\in\mbox{M}(4,\CM)$ by
\begin{align*}
U(\theta)=
\begin{bmatrix}
\ \, e^{i\theta}\chi(a) & \ \,e^{i\theta}\chi(b) \\
e^{-i\theta}\chi(c) & e^{-i\theta}\chi(d)
\end{bmatrix}
&=
\begin{bmatrix}
e^{i\theta} & 0 & 0 & 0 \\
0 & e^{i\theta} & 0 & 0 \\
0 & 0 & e^{-i\theta} & 0 \\
0 & 0 & 0 & e^{-i\theta}
\end{bmatrix}
\begin{bmatrix}
\chi(a) & \chi(b) \\
\chi(c) & \chi(d)
\end{bmatrix}
\\
&=
\begin{bmatrix}
e^{i\theta}a' & -e^{i\theta}a'' & e^{i\theta}b' & -e^{i\theta}b''\\
e^{i\theta}\ol{a''} & e^{i\theta}\ol{a'} & e^{i\theta}\ol{b''} & e^{i\theta}\ol{b'}\\
e^{-i\theta}c' & -e^{-i\theta}c'' & e^{-i\theta}d' & -e^{-i\theta}d''\\
e^{-i\theta}\ol{c''} & e^{-i\theta}\ol{c'} & e^{-i\theta}\ol{d''} & e^{-i\theta}\ol{d'}
\end{bmatrix}
.
\end{align*}
Then we formulate the evolution by
\begin{pro}
\label{TeProU}
\begin{align*}
\hat{\Phi}_{n}(\theta) \ =\  U(\theta)\hat{\Phi}_{n-1}(\theta)\ =\  U(\theta)^n\hat{\Phi}_0(\theta).
\end{align*}
\end{pro}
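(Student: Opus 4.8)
The plan is to derive the one-step relation $\hat\Phi_n(\theta)=U(\theta)\hat\Phi_{n-1}(\theta)$ by Fourier-transforming the spatial recursion of Lemma~\ref{PhiProb}(2), and then to iterate it. First I would observe that, since the walk starts at the origin, $\Phi_n(x)=0$ whenever $|x|>n$; hence every series $\hat\Phi_n(\theta)=\sum_{x\in\ZM}e^{-i\theta x}\Phi_n(x)$ is in fact a finite sum of vectors in $\CM^4$, so all the rearrangements below (linearity, pulling the constant matrices outside the sum, reindexing) are valid with no convergence issue.

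Next I would start from $\Phi_{n}(x)=\chi(P)\Phi_{n-1}(x+1)+\chi(Q)\Phi_{n-1}(x-1)$, multiply by $e^{-i\theta x}$ and sum over $x\in\ZM$. Moving $\chi(P)$ and $\chi(Q)$ out of the sums and substituting $y=x+1$, resp.\ $y=x-1$, produces the factors $e^{i\theta}$ and $e^{-i\theta}$, giving
\[
\hat\Phi_{n}(\theta)=e^{i\theta}\chi(P)\,\hat\Phi_{n-1}(\theta)+e^{-i\theta}\chi(Q)\,\hat\Phi_{n-1}(\theta)=\bigl(e^{i\theta}\chi(P)+e^{-i\theta}\chi(Q)\bigr)\hat\Phi_{n-1}(\theta).
\]
It then remains to check $e^{i\theta}\chi(P)+e^{-i\theta}\chi(Q)=U(\theta)$. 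Because $\chi$ is additive and $\chi(0)=0$ (Lemma~2.1), the block-row structure of $P$ and $Q$ from Sect.~2.2 gives $\chi(P)=\left[\begin{smallmatrix}\chi(a)&\chi(b)\\0&0\end{smallmatrix}\right]$ and $\chi(Q)=\left[\begin{smallmatrix}0&0\\\chi(c)&\chi(d)\end{smallmatrix}\right]$ (each block $2\times2$), so their weighted sum is exactly the $4\times4$ matrix defined to be $U(\theta)$. This establishes $\hat\Phi_{n}(\theta)=U(\theta)\hat\Phi_{n-1}(\theta)$ for every $n\ge1$.

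Finally, the closed form $\hat\Phi_n(\theta)=U(\theta)^n\hat\Phi_0(\theta)$ follows by a one-line induction on $n$, the base case $n=0$ being trivial and the inductive step being precisely the relation just proved. I do not expect a genuine obstacle here; the only points deserving a word of care are the finiteness of the spatial support (so the Fourier series are honest finite sums) and the elementary homomorphism properties of $\chi$ in Lemma~2.1, which are what make the block identification of $e^{i\theta}\chi(P)+e^{-i\theta}\chi(Q)$ with $U(\theta)$ immediate.
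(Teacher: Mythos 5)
Your proof is correct and follows essentially the same route as the paper: Fourier-transforming the recursion of Lemma~\ref{PhiProb}(2), reindexing to produce the factors $e^{\pm i\theta}$, and identifying $e^{i\theta}\chi(P)+e^{-i\theta}\chi(Q)$ with $U(\theta)$ (the paper simply runs the same computation starting from $U(\theta)\hat\Phi_{n-1}(\theta)$ rather than from $\hat\Phi_n(\theta)$). Your added remarks on the finite spatial support and the additivity of $\chi$ are sound but do not change the argument.
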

\begin{proof}
Using Lemma \ref{PhiProb} (2), we have
\begin{align*}
U(\theta)\hat{\Phi}_{n-1}(\theta)&=U(\theta)\sum_{x\in\ZM}e^{-i\theta x}\Phi_{n-1}(x)=\sum_{x\in\ZM}\left(e^{-i\theta(x-1)}\chi(P)+e^{-i\theta(x+1)}\chi(Q)\right)\Phi_{n-1}(x)\\
&=\sum_{\tilde{x}\in\ZM}e^{-i\theta\tilde{x}}\chi(P)\Phi_{n-1}(\tilde{x}+1)+\sum_{\tilde{x}\in\ZM}e^{-i\theta\tilde{x}}\chi(Q)\Phi_{n-1}(\tilde{x}-1)=\sum_{\tilde{x}\in\ZM}e^{-i\theta\tilde{x}}\Phi_{n}(\tilde{x})=\hat{\Phi}_{n}(\theta).
\end{align*}
\end{proof}
\noindent By Lemma \ref{PhiProb} and Proposition \ref{TeProU}, the probability distribution is expressed as
\begin{align*}
P(X_n=x)&=||\Phi_n(x)||^2=\Phi_n^*(x)\Phi_n(x)=\int^{\pi}_{-\pi}e^{-i\theta x}\hat{\Phi}_n^*(\theta)\frac{d\theta}{2\pi}
\int^{\pi}_{-\pi}e^{i\theta' x}\hat{\Phi}_n(\theta')\frac{d\theta'}{2\pi}
\\[+8pt]
&=\int^{\pi}_{-\pi}\int^{\pi}_{-\pi}e^{i(\theta'-\theta)x}\left(\hat{\Phi}_0^*(\theta)U^*(\theta)^n\right)
\left(U(\theta')^n\hat{\Phi}_0(\theta')\right)\frac{d\theta}{2\pi}\frac{d\theta'}{2\pi}.
\end{align*}
In this paper, we focus on the initial state $\Psi_0(x)=\delta_0(x){}^T\begin{bmatrix}\alpha & \beta \end{bmatrix}\in\HM^2$. 
Then $\Phi_0(x)$ equals to $\hat{\Phi}_0(\theta)$, and the form is given as
\begin{align*}
\chi(\Psi_0(x))=\delta_0(x)
\begin{bmatrix}
\alpha_0+\alpha_1i & -\alpha_2-\alpha_3i \\
\alpha_2-\alpha_3i & \ \ \,\alpha_0-\alpha_1i \\
\beta_0+\beta_1i & -\beta_2-\beta_3i \\
\beta_2-\beta_3i & \ \ \,\beta_0-\beta_1i \\
\end{bmatrix}
=\delta_0(x)
\begin{bmatrix}
\alpha' & -\alpha''\\
\ol{\alpha''} & \ol{\alpha'}\\
\beta' & -\beta''\\
\ol{\beta''} & \ol{\beta'}\\
\end{bmatrix}
,
\quad
\Phi_0(x)=\delta_0(x)
\begin{bmatrix}
\alpha_0+\alpha_1i\\
\alpha_2-\alpha_3i\\
\beta_0+\beta_1i\\
\beta_2-\beta_3i
\end{bmatrix}
=\delta_0(x)
\begin{bmatrix}
\alpha'\\
\ol{\alpha''}\\
\beta'\\
\ol{\beta''}
\end{bmatrix}
,
\end{align*}
where $\alpha=\alpha_0+\alpha_1i+\alpha_2j+\alpha_3k=\alpha'+\alpha''j,\quad \beta=\beta_0+\beta_1i+\beta_2j+\beta_3k=\beta'+\beta''j$.
\section{Probability distribution}
This section introduces probability distributions of some classes of QQWs on $\ZM$. As one of main results of this paper, we present
\begin{thm}
\label{sameasQW}
For the coin operator $U=\begin{bmatrix}a & b\\ c & d\end{bmatrix}\in{\rm U}(2, \HM)$ defined by the following four cases, the formulations of the probability distribution of QQW are the same as that of QW.
\begin{align*}
&Case\ 1:\quad b=c=0\\
&Case\ 2:\quad a=d=0\\
&Case\ 3:\quad a,d\in\RM,\quad b,c\in\HM\\
&Case\ 4:\quad a=a_0+a_1i,\quad b=b_2j+b_3k,\quad c=c_2j+c_3k,\quad d=d_0+d_1i,
\end{align*}
\end{thm}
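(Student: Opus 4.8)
The plan is to reduce each of the four cases, by a change of chirality basis, to an ordinary two–state complex QW, and then to recognise the resulting law through the known closed form of the two–state QW distribution. The basic tool is the remark that if $v_L,v_R\in\HM$ are unit quaternions and $V=\mathrm{diag}(v_L,v_R)$, then the QQW with coin $\widetilde U=V^{*}UV$ and initial vector ${}^T\!\begin{bmatrix}v_L^{*}\alpha & v_R^{*}\beta\end{bmatrix}$ has exactly the same probability distribution as the QQW with coin $U$ and initial vector ${}^T\!\begin{bmatrix}\alpha & \beta\end{bmatrix}$: indeed $\widetilde\Psi_n(x)=V^{*}\Psi_n(x)$ satisfies the evolution driven by the unitary coin $\widetilde U$ (whose entries are $v_L^{*}av_L$, $v_L^{*}bv_R$, $v_R^{*}cv_L$, $v_R^{*}dv_R$), and $\|V^{*}\Psi_n(x)\|^{2}=\|\Psi_n(x)\|^{2}$ because left multiplication by a unit quaternion is an isometry of $\HM$. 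So it is enough, in each case, to choose $v_L,v_R$ making every entry of $\widetilde U$ lie in $\CM=\RM\oplus\RM i$.

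I would carry this out case by case with the help of Lemma~\ref{unitary}. In Case~1 ($b=c=0$) take $v_L,v_R$ that conjugate $a$ and $d$ onto the $i$–axis (every quaternion is conjugate to one in $\CM$). In Case~2 ($a=d=0$, hence $|b|=1$) take $v_L$ with $v_L^{*}(bc)v_L\in\CM$ and then $v_R=b^{-1}v_L$, so that $v_L^{*}bv_R=1$ and $v_R^{*}cv_L=v_L^{*}(bc)v_L\in\CM$. In Case~3 ($a,d\in\RM$) the relation $a\overline c+b\overline d=0$ of Lemma~\ref{unitary} forces $c$ to be a real scalar multiple of $\overline b$ (the subcase $a=0$ reducing to Case~2); then any $v_L,v_R$ with $v_L^{*}bv_R\in\CM$ also put $v_R^{*}cv_L$, which is a real multiple of $\overline{v_L^{*}bv_R}$, into $\CM$, while the real entries $a,d$ are unchanged. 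In Case~4 the choice $v_L=1$, $v_R=j$ gives $\widetilde U=\begin{bmatrix}a&-b''\\[1mm]\overline{c''}&\overline d\end{bmatrix}$, all of whose entries are in $\CM$, unitary by Lemma~\ref{unitary} (via $\overline a\,b''=c''\overline d$, which follows from $\overline ab+\overline cd=0$). In each case the transformed initial vector is again a general unit vector of $\HM^{2}$, which I continue to denote ${}^T\!\begin{bmatrix}\alpha&\beta\end{bmatrix}$, and the new coin $\widetilde U=\begin{bmatrix}\widetilde a&\widetilde b\\ \widetilde c&\widetilde d\end{bmatrix}\in\mathrm{U}(2,\CM)$.

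It remains to treat a QQW with a complex coin $\widetilde U$ and a quaternionic initial vector. Writing $\Psi_n(x)=\Psi_n'(x)+\Psi_n''(x)\,j$ with $\Psi_n',\Psi_n''\in\CM^{2}$, the simplex and perplex parts decouple because $\widetilde U$ is complex, and each evolves as an ordinary two–state complex QW with coin $\widetilde U$, from the initial vectors ${}^T\!\begin{bmatrix}\alpha'&\beta'\end{bmatrix}$ and ${}^T\!\begin{bmatrix}\alpha''&\beta''\end{bmatrix}$. Since $\|\Psi_n(x)\|^{2}=\|\Psi_n'(x)\|^{2}+\|\Psi_n''(x)\|^{2}$, we get
\[
P(X_n=x)=P^{\mathrm{QW}}_{\widetilde U,(\alpha',\beta')}(X_n=x)+P^{\mathrm{QW}}_{\widetilde U,(\alpha'',\beta'')}(X_n=x),
\]
where $P^{\mathrm{QW}}_{\widetilde U,(\gamma,\delta)}(X_n=x)$ is the ordinary QW law, read with the obvious homogeneity for a sub–unit $(\gamma,\delta)$. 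I now use the known structure of the two–state QW distribution: for fixed $n,x$ one has $P^{\mathrm{QW}}_{\widetilde U,(\gamma,\delta)}(X_n=x)=\varphi^{*}H_n(x)\varphi$ with $\varphi={}^T\!\begin{bmatrix}\gamma&\delta\end{bmatrix}$ and $H_n(x)$ a fixed positive semidefinite Hermitian matrix whose off–diagonal entry is a \emph{real} multiple of $\widetilde b\,\overline{\widetilde a}$ — equivalently, an affine function of the two real numbers $|\gamma|^{2}-|\delta|^{2}$ and $2\Re(\widetilde a\,\overline{\widetilde b}\,\gamma\overline\delta)$. Summing the two terms above and using $|\alpha'|^{2}+|\alpha''|^{2}=|\alpha|^{2}$, $|\beta'|^{2}+|\beta''|^{2}=|\beta|^{2}$ and $\alpha'\overline{\beta'}+\alpha''\overline{\beta''}=(\alpha\overline\beta)'$ (the simplex part of $\alpha\overline\beta$), we see that $P(X_n=x)$ equals the same affine expression evaluated at $|\alpha|^{2}-|\beta|^{2}$ and $2\Re(\widetilde a\,\overline{\widetilde b}\,(\alpha\overline\beta)')$. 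As $|(\alpha\overline\beta)'|\le|\alpha\overline\beta|=|\alpha||\beta|$, there is a single complex unit vector ${}^T\!\begin{bmatrix}\gamma_0&\delta_0\end{bmatrix}$ with $|\gamma_0|=|\alpha|$, $|\delta_0|=|\beta|$ and relative phase chosen so that $\Re(\widetilde a\,\overline{\widetilde b}\,\gamma_0\overline{\delta_0})=\Re(\widetilde a\,\overline{\widetilde b}\,(\alpha\overline\beta)')$; with this vector $P(X_n=x)=P^{\mathrm{QW}}_{\widetilde U,(\gamma_0,\delta_0)}(X_n=x)$ for every $n$ and $x$, which is the assertion. (In Cases~1 and 2 the cross term vanishes, as there $\widetilde b=0$, respectively $\widetilde a=0$, so this step is immediate.)

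The step I expect to be the real obstacle is the last one: it requires the precise initial–state dependence of the two–state QW law — above all the fact that the off–diagonal part of $H_n(x)$ always points in the single fixed direction $\widetilde b\,\overline{\widetilde a}$, which is exactly what lets the two QW terms combine into one genuine QW law instead of a mixture — together with the inequality $|(\alpha\overline\beta)'|\le|\alpha||\beta|$ that makes the phase of $(\gamma_0,\delta_0)$ realisable. By contrast, the reductions of the second paragraph are short computations once Lemma~\ref{unitary} is available.
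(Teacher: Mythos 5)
Your proposal is correct, but it reaches the theorem by a genuinely different route than the paper. The paper treats the four cases by four separate computations: direct evaluation of the amplitudes in Cases 1 and 2, commutativity of the subalgebra $\RM[a,b,\mp\ol{b}]$ plus Konno's path-sum formula (Theorem \ref{PQRS}) in Case 3, and the $\chi$-splitting $\chi(P)=P_1+P_2$, $\chi(Q)=Q_1+Q_2$ into two complex $2$-state walks in Case 4. You instead use a single mechanism — a local change of chirality basis $U\mapsto V^{*}UV$, $V=\mathrm{diag}(v_L,v_R)$ with unit quaternions, which preserves the distribution because left multiplication by a unit quaternion is an isometry and conjugation respects products of $P$'s and $Q$'s — to force all four coins into $\mathrm{U}(2,\CM)$; your Case 4 choice $v_L=1$, $v_R=j$ reproduces exactly the paper's $U^{(1)}$, and your simplex/perplex decoupling is the paper's $P_1,Q_1$ versus $P_2,Q_2$ splitting in conceptual form. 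The step you flag as the real obstacle — that the two complex sub-walks recombine into a single QW law because the off-diagonal of $H_n(x)$ always points along $\ol{\widetilde a}\,\widetilde b$, together with $|(\alpha\ol{\beta})'|\le|\alpha||\beta|$ — is in fact fully supplied by Theorem \ref{KonnoPD}(3) and the $x=\pm n$ remark quoted in the paper, whose initial-state dependence is affine in $|\alpha|^2-|\beta|^2$ and $\Re(\ol{\alpha}\ol{a}b\beta)$ with a direction independent of $n$ and $x$; the paper leaves this recombination implicit behind ``by computing $||\Psi_n(x)||^2$''. One can also check that $\ol{\widetilde\alpha}\,\ol{\widetilde a}\,\widetilde b\,\widetilde\beta=\ol{\alpha}\,\ol{a}\,b\,\beta$ and $|\widetilde a|=|a|$, $|\widetilde b|=|b|$ under your gauge, so your conclusion matches the paper's formulation verbatim. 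What your argument buys beyond the paper's is uniformity across the four cases and an explicit explanation of why the sum of two sub-distributions collapses to a single QW law here — precisely the collapse that fails for Case 5 and produces the superposed profile of Figure 1; what it costs is the explicit closed form of $\Psi_n(x)$ that the paper's case-by-case computations provide.
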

\noindent Especially, if $abcd=0$, then the QQW becomes Case 1 or Case 2. Moreover the probability distribution of the QW on $\ZM$ was studied by Konno\cite{KonnoQW1} as follows.
\begin{thm}{\rm Konno\cite{KonnoQW1}}\hspace{0.5cm}\label{KonnoPD} For $U=\begin{bmatrix}a & b\\c & d\end{bmatrix}\in{\rm U}(2,\CM)$ and $t=1,2,...,\left[ n/2 \right]$， we have\\
(1)\quad If $b=c=0$, then
\begin{align*}
P(X_n=x)=\delta_{-n}(x)|\alpha|^2+\delta_n(x)|\beta|^2.
\end{align*}
(2)\quad If $a=d=0$, then
\begin{align*}
P(X_{2n}=x)=\delta_0(x),
\qquad
P(X_{2n+1}=x)=\delta_1(x)|\alpha|^2+\delta_{-1}(x)|\beta|^2.
\end{align*}
(3)\quad If $abcd\neq0$, then
\begin{flalign*}
P(X_n &= \pm(n-2t))
\\
=&\ a^{2(n-1)}\sum_{\gamma=1}^{t}\sum_{\delta=1}^{t}\left(-\frac{|b|^2}{|a|^2}\right)^{\gamma + \delta}
\binom{t-1}{\gamma -1}\binom{t-1}{\delta -1}\binom{n-t-1}{\gamma -1}\binom{n-t-1}{\delta -1}\\
&\times\frac{1}{\gamma\delta}
\Bigg[
\frac{n^2-(\gamma+\delta+2t)n+2t^2}{2}
+\frac{\gamma\delta}{|b|^2}\\
&\pm\frac{n-2t}{2}\left\{n(|a|^2-|b|^2)+\gamma+\delta \right\}(|\beta|^2-|\alpha|^2)
\pm \frac{n-2t}{|b|^2}\left( \gamma+\delta-2n|b|^2\right)\Re(\ol{\alpha}\ol{a}b\beta)
\Bigg].
\end{flalign*}
\end{thm}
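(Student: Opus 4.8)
The plan is to unroll the evolution $\Psi_{n+1}(x)=P\Psi_n(x+1)+Q\Psi_n(x-1)$ from the initial state $\Psi_0(x)=\delta_0(x){}^T[\alpha\ \beta]$ into a sum over lattice paths. Since $P$ moves the walker one step left and $Q$ one step right, reaching position $x$ at time $n$ requires exactly $l=(n-x)/2$ applications of $P$ and $r=(n+x)/2$ of $Q$, so that $\Psi_n(x)=\Xi_n(l,r)\,{}^T[\alpha\ \beta]$, where $\Xi_n(l,r)=\sum_w w$ runs over all $\binom{n}{l}$ ordered words $w$ in the letters $P,Q$ with $l$ copies of $P$ and $r$ of $Q$, each word read as a matrix product. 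The probability is then $P(X_n=x)=\|\Psi_n(x)\|^2=|\Psi_n^L(x)|^2+|\Psi_n^R(x)|^2$, and everything reduces to computing $\Xi_n(l,r)$.

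Cases (1) and (2) fall out immediately from this representation. If $b=c=0$ then $P=\mathrm{diag}(a,0)$ and $Q=\mathrm{diag}(0,d)$, so every mixed word vanishes and only $P^n$ (at $x=-n$) and $Q^n$ (at $x=n$) survive, giving amplitudes $a^n\alpha$ and $d^n\beta$; since $|a|=|d|=1$ by Lemma \ref{unitary}, this is $\delta_{-n}(x)|\alpha|^2+\delta_n(x)|\beta|^2$. If $a=d=0$ then $P^2=Q^2=0$, so only strictly alternating words survive; these occur only for $x\in\{-1,0,1\}$ and collapse to the stated $2$-periodic distribution.

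For case (3) the engine is a reduction algebra: a direct computation gives $P^2=aP$, $Q^2=dQ$, $PQP=bc\,P$ and $QPQ=bc\,Q$. Hence every word collapses to a scalar multiple of one of $P,Q,PQ,QP$, and the scalar is governed by the block structure (maximal runs of equal letters) of the word. A word whose $l$ copies of $P$ form $\gamma$ runs and whose $r$ copies of $Q$ form $\delta$ runs (forcing $|\gamma-\delta|\le 1$) collapses to $a^{l-\gamma}d^{r-\delta}$ times a power of $bc$ determined by $(\gamma,\delta)$ times the appropriate matrix, while the number of words with a prescribed run structure is counted by compositions and contributes the binomial factors $\binom{l-1}{\gamma-1}\binom{r-1}{\delta-1}$. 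Because $P,PQ$ feed $\Psi^L$ and $Q,QP$ feed $\Psi^R$, with $\delta$ tied to $\gamma$ in each case, the amplitudes $\Psi_n^L(x),\Psi_n^R(x)$ become explicit single sums over $\gamma$ of the entries of $\{P,Q,PQ,QP\}$ acting on ${}^T[\alpha\ \beta]$.

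Finally I would form $\|\Psi_n(x)\|^2$, which produces a double sum whose index $\gamma$ comes from the amplitude and $\delta$ from its conjugate, explaining the two independent binomial pairs $\binom{t-1}{\gamma-1}\binom{n-t-1}{\gamma-1}$ and $\binom{t-1}{\delta-1}\binom{n-t-1}{\delta-1}$ (with $l=t$, $r=n-t$ for $x=n-2t$, and the roles of $l,r$ swapped for $x=-(n-2t)$). The unitarity relations of Lemma \ref{unitary}, in particular $|a|=|d|$, $|b|=|c|$ and $\ol a b+\ol c d=0$, from which $bc/(ad)=-|b|^2/|a|^2$, convert the phase-carrying products $a,d,bc$ into the real quantities $|a|^2,|b|^2$ and the single real cross term $\Re(\ol\alpha\,\ol a\,b\,\beta)$, which is what isolates the $|\alpha|^2,|\beta|^2$ and interference contributions in the statement. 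I expect the main obstacle to be exactly this last reduction: the boundary terms from the $P$-, $Q$-, $QP$-type words carry binomials with shifted indices such as $\binom{n-t-1}{\gamma-2}$, and recombining them against the principal $PQ$-type terms through Vandermonde/Jacobi-type identities is what manufactures the degree-two polynomial $\tfrac12\bigl(n^2-(\gamma+\delta+2t)n+2t^2\bigr)+\gamma\delta/|b|^2$ together with the $\pm$ interference terms, while tracking the signs for the two symmetric positions $x=\pm(n-2t)$ simultaneously is the delicate part.
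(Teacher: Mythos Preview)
The paper does not give its own proof of this theorem: it is quoted verbatim from Konno~\cite{KonnoQW1} and used as a black box. For parts~(1) and~(2), the paper does rederive the formulas in Sections~3.1 and~3.2 (in the slightly more general quaternionic setting), and the argument there is exactly your path-expansion argument: $PQ=QP=O$ in Case~1, $P^2=Q^2=O$ in Case~2, leaving only the trivial words. So on those parts your approach and the paper's coincide.

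For part~(3) the paper proves nothing; it only imports the closed form of $\Xi_n(l,m)$ as another cited result (Theorem~\ref{PQRS}). Your outline --- the reduction algebra $P^2=aP$, $Q^2=dQ$, $PQP=bcP$, $QPQ=bcQ$, counting words by their run structure to get the $\binom{l-1}{\gamma-1}\binom{m-1}{\gamma-1}$ factors, and then squaring to produce the $(\gamma,\delta)$ double sum --- is precisely the strategy of the original reference~\cite{KonnoQW1}, and your identification of $bc/(ad)=-|b|^2/|a|^2$ via unitarity as the mechanism that makes everything real is correct. In short, your proposal is right and matches the source the paper cites; there is simply no in-paper proof of~(3) to compare against beyond the quotation of the $\Xi_n(l,m)$ formula, which is the midpoint of your sketch.
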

\noindent In general, we remark that $P(X_n=x)$ with $x=\pm n$ is given by
\begin{align*}
\Psi_n(n)=Q^n\Psi_0(x),\quad
\Psi_n(-n)=P^n\Psi_0(x),
\end{align*}
and
\begin{align*}
Q^n=d^{n-1}Q,\quad P^n=a^{n-1}P.
\end{align*}
Thus
\begin{align*}
P(X_n=n)&=|a|^{2(n-1)}
\left(
|b|^2|\alpha|^2+|a|^2|\beta|^2-2\Re(\ol{\alpha}\ol{a}b\beta)
\right),\\
P(X_n=-n)&=|a|^{2(n-1)}
\left(
|a|^2|\alpha|^2+|b|^2|\beta|^2+2\Re(\ol{\alpha}\ol{a}b\beta)
\right).
\end{align*}
\subsection{Case 1 \label{simple}}
In Case 1, the coin operator $U$ is given by
\begin{align*}
&U=
\begin{bmatrix}
a & 0 \\
0 & d
\end{bmatrix}
\in{\rm U}(2,\HM).
\end{align*}
Then $P$ and $Q$ are
\begin{align*}
P=\begin{bmatrix}
a & 0 \\
0 & 0
\end{bmatrix},
\quad
Q=\begin{bmatrix}
0 & 0 \\
0 & d
\end{bmatrix},
\quad
PQ=QP=O,
\end{align*}
where $O$ means the zero matrix.
Here $PQ=QP=O$ implies the amplitude and the probability distribution as
\begin{align*}
\Psi_n(x)=
\delta_{-n}(x)a^n \begin{bmatrix} \alpha \\ 0     \end{bmatrix}
+
\delta_{n}(x)d^n \begin{bmatrix} 0      \\ \beta \end{bmatrix},\qquad
P(X_n=x)=\delta_{-n}(x)|\alpha|^2+\delta_n(x)|\beta|^2.
\end{align*}
Thus we have Theorem \ref{KonnoPD} (1).
\subsection{Case 2 \label{local}}
In Case 2, the coin operator $U$ is given by
\begin{align*}
&U=
\begin{bmatrix}
0 & b \\
c & 0
\end{bmatrix}
\in{\rm U}(2,\HM).
\end{align*}
Then $P$ and $Q$ are
\begin{align*}
P=\begin{bmatrix}
0 & b \\
0 & 0
\end{bmatrix}
,\quad
Q=\begin{bmatrix}
0 & 0 \\
c & 0
\end{bmatrix}
,\quad
P^2=Q^2=O,\quad
PQ=b\begin{bmatrix} c & 0 \\ 0 & 0 \end{bmatrix}, \quad
QP=c\begin{bmatrix} 0 & 0 \\ 0 & b \end{bmatrix}.
\end{align*}
As with the Case 1, $P^2=Q^2=O$ implies the amplitude and the probability distribution as
\begin{align*}
&\Psi_{2n}(x)=
\delta_0(x)\begin{bmatrix} (bc)^n\alpha \\ (cb)^n\beta     \end{bmatrix},
\hspace{1cm}
\Psi_{2n+1}(x)=
\delta_1(x)\begin{bmatrix} 0 \\ c(bc)^n\alpha     \end{bmatrix}
+\delta_{-1}(x)\begin{bmatrix} b(cb)^n\beta \\ 0     \end{bmatrix},
\\[+10pt]
&P(X_{2n}=x)=\delta_0(x),
\hspace{1.9cm}
P(X_{2n+1}=x)=\delta_1(x)|\alpha|^2+\delta_{-1}(x)|\beta|^2. 
\end{align*}
So Theorem \ref{KonnoPD} (2) is obtained.
\subsection{Case 3 \label{a&d real}}
In Case 3, the coin operator $U$ is given by
\begin{align*}
&U=
\begin{bmatrix}
a & b \\
c & d
\end{bmatrix}
\in{\rm U}(2,\HM)
\quad
a,d\in\RM
\quad
b,c\in\HM.
\end{align*}
By Lemma \ref{unitary}, we see $d=\pm a$ and $c=\mp\overline{b}$. Then $U$, $P$ and $Q$ are expressed as
\begin{align*}
U=\begin{bmatrix}
\ \ a & \ \ b \\[+5pt]
\mp\overline{b} & \pm a
\end{bmatrix},
\quad
P=\begin{bmatrix}a & b \\ 0 & 0\end{bmatrix},\
\quad
Q=\begin{bmatrix} \ \ 0 & \ \ 0 \\ \mp\overline{b} & \pm a\end{bmatrix}.
\end{align*}
Here we note that $P$ and $Q$ are commutative for their own components $a$, $b$, and $\mp\ol{b}$. Therefore the coin operator for this class is treated as the same as QW. Then we use the following result.
\begin{thm} {\rm Konno\cite{KonnoQW1}}\quad
For $U=\begin{bmatrix} a & b \\ c & d \end{bmatrix}\in{\rm U}(2,\CM)$, $\Xi_n(l,m)$ is determined as follows.
\begin{align*}
&\Xi_n(l,m)=a^l d^m \sum_{\gamma=1}^{l\land m}\left(-\frac{|b|^2}{|a|^2}\right)^\gamma\binom{l-1}{\gamma-1}\binom{m-1}{\gamma-1}\times \frac{1}{\gamma}
\begin{bmatrix}
l& \displaystyle \frac{bcl+(ad-bc)\gamma}{ac}\\
\displaystyle \frac{bcm+(ad-bc)\gamma}{bd} & m
\end{bmatrix}
\end{align*}
with $l\land m=\min\{l,m\}$. 
\label{PQRS}
\end{thm}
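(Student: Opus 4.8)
This is Konno's theorem, cited here; the route I would take is as follows. Recall that $\Xi_n(l,m)$ is the sum, over all $\binom{l+m}{l}$ orderings, of the products of $l$ copies of $P$ and $m$ copies of $Q$ — the matrix that acts on the initial coin state to produce $\Psi_{l+m}$ at the position reached by $l$ left and $m$ right steps. The starting point is that $P$ and $Q$ are rank one, so every such word collapses to a scalar times one of four reduced matrices. One checks directly
\[
P^2=aP,\quad Q^2=dQ,\quad PQP=bc\,P,\quad QPQ=bc\,Q,\quad (PQ)^2=bc\,PQ,\quad (QP)^2=bc\,QP ,
\]
and, iterating these, a word with $\gamma$ maximal $P$-runs (of total length $l$) and $\delta$ maximal $Q$-runs (of total length $m$), so that $|\gamma-\delta|\le 1$, equals $a^{\,l-\gamma}d^{\,m-\delta}(bc)^{\lfloor(\gamma+\delta-1)/2\rfloor}$ times $P$, $Q$, $PQ$ or $QP$ according as the word begins and ends with $P$ or with $Q$.

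First I would group the $\binom{l+m}{l}$ words by their type and by $(\gamma,\delta)$. For each admissible $(\gamma,\delta)$ of a given type the alternating block pattern is unique, while the block lengths may be chosen in $\binom{l-1}{\gamma-1}$ ways among the $P$'s and $\binom{m-1}{\delta-1}$ ways among the $Q$'s (stars and bars). Summing the scalar weights over run lengths then produces four single-index series. Since $P$ and $PQ$ have vanishing bottom row while $Q$ and $QP$ have vanishing top row, the top row of $\Xi_n(l,m)$ is the sum of the ``type $P$'' series (with $\delta=\gamma-1$) and the ``type $PQ$'' series (with $\delta=\gamma$), and the bottom row is the sum of the ``type $Q$'' and ``type $QP$'' series.

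Next I would evaluate the top row. Writing the numerator of the claimed $(1,2)$-entry as $bcl+(ad-bc)\gamma=bc(l-\gamma)+ad\,\gamma$, the part proportional to $l-\gamma$ turns, via $\binom{l-1}{\gamma-1}\frac{l-\gamma}{\gamma}=\binom{l-1}{\gamma}$, into exactly the type-$P$ series, and the part proportional to $ad\,\gamma$ — after cancelling $\gamma$ and one factor of $a$ — becomes the type-$PQ$ series; the $(1,1)$-entry $l/\gamma$ appears the same way from $\binom{l-1}{\gamma}+\binom{l-1}{\gamma-1}=\binom{l}{\gamma}=\frac{l}{\gamma}\binom{l-1}{\gamma-1}$. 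The bottom row is identical under the symmetry $P\leftrightarrow Q$, $l\leftrightarrow m$, $a\leftrightarrow d$, $b\leftrightarrow c$. Finally, pulling out $a^l d^m$ and invoking unitarity of $U$ — which gives $ad=|a|^2\det U$ and $bc=-|b|^2\det U$, hence $bc/(ad)=-|b|^2/|a|^2$ — rewrites $a^{\,l-\gamma}d^{\,m-\gamma}(bc)^\gamma$ as $a^l d^m\big(-|b|^2/|a|^2\big)^\gamma$ and forces the summation range $1\le\gamma\le l\wedge m$, yielding the stated formula.

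The main obstacle I expect is the bookkeeping in these middle steps: aligning the powers of $a$, $d$ and $bc$ across the four word types so that the index shifts match up, and verifying that the off-diagonal entries, which carry the denominators $ac$ and $bd$, genuinely assemble from the type-$P$/type-$PQ$ (resp. type-$Q$/type-$QP$) series rather than from unrelated terms. The diagonal entries drop out cleanly from Pascal's rule, but the off-diagonal ones require carrying the $b$ and $c$ factors through the run-length sums and then using the unitarity relations $|a|=|d|$, $|b|=|c|$, $a\overline c+b\overline d=0$ to put everything over the displayed common denominators.
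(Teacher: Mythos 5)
The paper offers no proof of this statement --- it is quoted verbatim from Konno (2002) as an external result --- so there is nothing internal to compare against; what matters is whether your reconstruction stands on its own, and it does. Your route is the standard path-counting derivation. I checked the load-bearing steps: the reduction identities $P^2=aP$, $Q^2=dQ$, $PQP=bc\,P$, $QPQ=bc\,Q$ hold for the rank-one matrices $P$ and $Q$ defined in Section 2.2; the stars-and-bars count $\binom{l-1}{\gamma-1}\binom{m-1}{\delta-1}$ for the run lengths is right; and the assembly of the top row works exactly as you say --- after reindexing the type-$P$ series ($\delta=\gamma-1$) by $\gamma\to\gamma+1$, its contribution to the $(1,2)$-entry carries $\frac{b}{a}\binom{l-1}{\gamma}$ against the common factor $a^{l-\gamma}d^{m-\gamma}(bc)^{\gamma}\binom{m-1}{\gamma-1}$, the type-$PQ$ series carries $\frac{d}{c}\binom{l-1}{\gamma-1}$, and these two pieces are precisely the split $\frac{bc(l-\gamma)+ad\gamma}{ac\,\gamma}\binom{l-1}{\gamma-1}$ of the stated entry, while Pascal's rule $\binom{l-1}{\gamma}+\binom{l-1}{\gamma-1}=\frac{l}{\gamma}\binom{l-1}{\gamma-1}$ gives the diagonal. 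The conversion $a^{l-\gamma}d^{m-\gamma}(bc)^{\gamma}=a^{l}d^{m}\bigl(-|b|^2/|a|^2\bigr)^{\gamma}$ via $a\overline{c}+b\overline{d}=0$ is also correct. The only things I would tighten are boundary bookkeeping: the formula presupposes $l,m\ge 1$ (for $m=0$ one has $\Xi_n(l,0)=a^{l-1}P$, outside the displayed sum), and the $\gamma=1$ term of the type-$P$ series vanishes automatically because $\binom{m-1}{-1}=0$, which is worth saying explicitly. These are not gaps, just details to spell out in a full write-up.
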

\noindent Here $\Xi_n(l,m)$ means the sum of possible paths of the walker moved $l$-step left and $m$-step right at time $n$. For instance,
\begin{align*}
\Xi_4(1,3)=PQ^3+QPQ^2+Q^2PQ+Q^2P.
\end{align*}
We remark that $l$ and $m$ satisfy $m+l=n$ and $m-l=x$. By the definition, $\Psi_n(x)$ is expressed by $\Xi_n(l,m)$ as
\begin{align*}
\Psi_n(x)\ =\ \Xi_n(l,m)\Psi_0(x)\ =\ \Xi_n(l,m)\begin{bmatrix} \alpha \\ \beta \end{bmatrix}.
\end{align*}
By Theorem \ref{PQRS}, the sum of possible paths for this case is given by
\begin{align*}
&\Xi_n(l,m)=(\pm 1)^m a^n \sum_{\gamma=1}^{l\land m}\left(-\frac{|b|^2}{|a|^2}\right)^\gamma\binom{l-1}{\gamma-1}\binom{m-1}{\gamma-1}\frac{1}{\gamma}\times 
\begin{bmatrix}
l& \displaystyle \frac{|b|^2l-\gamma}{a|b|^2}b\\
\displaystyle \frac{-|b|^2m+\gamma}{a|b|^2}\overline{b} & m
\end{bmatrix}.
\end{align*}
Then we have
\begin{align*}
&\Psi_n(x)=(\pm 1)^m a^n \sum_{\gamma=1}^{l\land m}\left(-\frac{|b|^2}{|a|^2}\right)^\gamma\binom{l-1}{\gamma-1}\binom{m-1}{\gamma-1}\times \frac{1}{a|b^2|\gamma}
\begin{bmatrix}
a|b|^2l\alpha-(\gamma-|b|^2l)b\beta \\
a|b|^2m\beta+(\gamma-|b|^2m)\overline{b}\alpha
\end{bmatrix}.
\end{align*}
We obtain the probability distribution as the same as that of QW by computing $||\Psi_n(x)||^2$.
\subsection{Case 4 \label{cross}}
In Case 4, the coin operator $U$ is given by
\begin{align*}
&U=
\begin{bmatrix}
a & b \\
c & d
\end{bmatrix}
\in{\rm U}(2,\HM),
\quad
a=a_0+a_1i,\quad b=b_2j+b_3k,\quad c=c_2j+c_3k,\quad d=d_0+d_1i.
\end{align*}
Each parameter has only the simplex or perplex part. Specifically, $a=a'$, $d=d'$, $b=b''j$, and $c=c''j$. 
Then $\chi(P)$ and $\chi(Q)$ are
\begin{align*}
\chi(P)=\begin{bmatrix}
a' & \ 0 & \ 0 & -b''\, \\
0 & \ \overline{a'} & \ \ol{b''} & \ 0 \\
0 & \ 0 & \ 0 & \ 0 \\
0 & \ 0 & \ 0 & \ 0
\end{bmatrix},
\quad
\chi(Q)=\begin{bmatrix}
0 & \ 0 & 0 & 0 \\
0 & \ 0 & 0 & 0 \\
0 & -c'' & d' & 0 \\
\ol{c''} & \ 0 & 0 & \overline{d'} 
\end{bmatrix}.
\end{align*}
\vspace{1mm}
Furthermore we divide these matrices into
\begin{align*}
P_1=\begin{bmatrix}
a' & \ 0 & \ 0 & -b''\, \\
0 & \ 0 & \ 0 & \ 0 \\
0 & \ 0 & \ 0 & \ 0 \\
0 & \ 0 & \ 0 & \ 0
\end{bmatrix},
\quad
P_2=\begin{bmatrix}
0 & 0 & 0 & 0 \\
0 & \overline{a'} & \ol{b''} & 0 \\
0 & 0 & 0 & 0 \\
0 & 0 & 0 & 0
\end{bmatrix},
\quad
Q_1=\begin{bmatrix}
0 & 0 & 0 & 0 \\
0 & 0 & 0 & 0 \\
0 & 0 & 0 & 0 \\
\ol{c''} & 0 & 0 & \overline{d'} 
\end{bmatrix},
\quad
Q_2=\begin{bmatrix}
0 & \ 0 & 0 & 0 \\
0 & \ 0 & 0 & 0 \\
0 & -c'' & d' & 0 \\
0 & \ 0 & 0 & 0 
\end{bmatrix}.
\end{align*}
Then the product of these matrices satisfies
\begin{align*}
P_1P_2=P_1Q_2=Q_1P_2=Q_1Q_2=O,\quad P_2P_1=P_2Q_1=Q_2P_1=Q_2Q_1=O.
\end{align*} Hence $\chi(\Xi_n(l,m))$ is divided into the $\Xi_n^{(1)}(l,m)\in\mbox{M}(4,\CM)$ and $\Xi_n^{(2)}(l,m)\in\mbox{M}(4,\CM)$. Each matrix is composed by the combination of corresponding $P_i$ and $Q_i$. For example,
\begin{align*}
\chi(\Xi_3(1,2))=
\chi(PQ^2)+\chi(QPQ)+\chi(Q^2P)&=P_1Q_1^2+Q_1P_1Q_1+Q_1^2P_1+P_2Q_2^2+Q_2P_2Q_2+Q_2^2P_2
\\
&=\Xi_3^{(1)}(1,2)+\Xi_3^{(2)}(1,2),
\end{align*}
with $\Xi_3^{(1)}(1,2)=P_1Q_1^2+Q_1P_1Q_1+Q_1^2P_1$ and $\Xi_3^{(2)}(1,2)=P_2Q_2^2+Q_2P_2Q_2+Q_2^2P_2$.\\[+5pt]
Moreover each component of $\Xi_n^{(1)}(l,m)$ and $\Xi_n^{(2)}(l,m)$ corresponds to the sum of possible paths with the coin operator given as the following $U^{(1)}$ and $U^{(2)}$, respectively.
\begin{align*}
U^{(1)}=\begin{bmatrix}a' & -\overline{b'}\, \\
 c' & \ \ \overline{d'}\end{bmatrix}\in\mbox{U}(2,\CM),\hspace{1.2cm} 
U^{(2)}=\begin{bmatrix}\ \ \overline{a'} & b' \\
 -\overline{c'} & d'\end{bmatrix}\in\mbox{U}(2,\CM).
\end{align*}
By Theorem \ref{PQRS} and Lemma \ref{unitary}, we get
\begin{align*}
\Xi_n^{(1)}(l,m)=&(a')^l (\overline{d'})^m \sum_{\gamma=1}^{l\land m}\left(-\frac{|b|^2}{|a|^2}\right)^\gamma\binom{l-1}{\gamma-1}\binom{m-1}{\gamma-1}\frac{1}{|a|^2|b|^2\gamma}
\begin{bmatrix}
|a|^2|b|^2l & 0 & 0 & -(|b|^2l-\gamma)\ol{a'}b''\\
0 & 0 & 0 & 0\\
0 & 0 & 0 & 0\\
(|b|^2m-\gamma)a'\ol{b''} & 0 & 0 & |a|^2|b|^2m
\end{bmatrix},
\\[+10pt]
\Xi_n^{(2)}(l,m)=&(\overline{a'})^l (d')^m \sum_{\gamma=1}^{l\land m}\left(-\frac{|b|^2}{|a|^2}\right)^\gamma\binom{l-1}{\gamma-1}\binom{m-1}{\gamma-1}\frac{1}{|a|^2|b|^2\gamma}
\begin{bmatrix}
0 & 0 & 0 & 0\\
0 & |a|^2|b|^2l & (|b|^2l-\gamma)a'\ol{b''} & 0\\
0 & -(|b|^2m-\gamma)\ol{a'}b'' & |a|^2|b|^2m & 0\\
0 & 0 & 0 & 0\\
\end{bmatrix}.
\end{align*}
Then $\chi(\Xi_n(l,m))$ is given by
\begin{align*}
&\chi(\Xi_n(l,m))=\Xi_n^{(1)}(l,m)+\Xi_n^{(2)}(l,m)=
\\
&\sum_{\gamma=1}^{l\land m}\left(-\frac{|b|^2}{|a|^2}\right)^\gamma\binom{l-1}{\gamma-1}\binom{m-1}{\gamma-1} \frac{1}{|a|^2|b|^2\gamma}
\begin{bmatrix}
(a')^l(\overline{d'})^m& 0& 0& 0\\
0& (\overline{a'})^l(d')^m& 0& 0\\
0& 0& (\overline{a'})^l(d')^m& 0\\
0& 0& 0& (a')^l(\overline{d'})^m
\end{bmatrix}
\\[+8pt]
&\hspace{3.5cm}\begin{bmatrix}
|a|^2|b|^2l & 0 & 0 & -(|b|^2l-\gamma)\ol{a'}b''\,\\
0 & |a|^2|b|^2l & (|b|^2l-\gamma)a'\ol{b''} & 0\\
0 & -(|b|^2m-\gamma)\ol{a'}b'' & |a|^2|b|^2m & 0\\
(|b|^2m-\gamma)a'\ol{b''} & 0 & 0 & |a|^2|b|^2m
\end{bmatrix}.
\end{align*}
Noting  
$\chi(x)=\begin{bmatrix} x' & -x'' \\ \ol{x''} & \ol{x'}\end{bmatrix}$ with $x=x'+x''j\in\HM$, we have
\begin{align*}
&\Xi_n(l,m)=\sum_{\gamma=1}^{l\land m}\left(-\frac{|b|^2}{|a|^2}\right)^\gamma\binom{l-1}{\gamma-1}\binom{m-1}{\gamma-1} \frac{1}{|a|^2|b|^2\gamma}
\\
&
\hspace{5.0 cm}
\begin{bmatrix}
(a')^l(\overline{d'})^m& 0\\
0& (\overline{a'})^l(d')^m
\end{bmatrix}
\begin{bmatrix}
|a|^2|b|^2l & j(|b|^2l-\gamma)a'\ol{b''}\,\\
j(|b|^2m-\gamma)a'\ol{b''} & |a|^2|b|^2m
\end{bmatrix}.
\end{align*}
Hence
\begin{align*}
&\Psi_n(x)=\sum_{\gamma=1}^{l\land m}\left(-\frac{|b|^2}{|a|^2}\right)^\gamma\binom{l-1}{\gamma-1}\binom{m-1}{\gamma-1} \frac{1}{|a|^2|b|^2\gamma}
\\
&
\hspace{5.0cm}
\begin{bmatrix}
(a')^l(\overline{d'})^m& 0\\
0& (\overline{a'})^l(d')^m
\end{bmatrix}
\begin{bmatrix}
|a|^2|b|^2l\alpha+j(|b|^2l-\gamma)a'\ol{b''}\beta\\
j(|b|^2m-\gamma)a'\ol{b''}\alpha+|a|^2|b|^2m\beta
\end{bmatrix}.
\end{align*}
Therefore we obtain the probability distribution $P(X_n=x)$ as the same as the QW by computing $||\Psi_n(x)||^2$.

\section{Limit distribution \label{Lpd}}
This section presents the weak limit theorem of the rescaled QQW with $abcd\neq 0$. For QW, the corresponding limit theorem was given by Konno \cite{KonnoQW1}.
\begin{thm}{\rm Konno \cite{KonnoQW1}}\quad For $\displaystyle \lim_{n \to \infty}\dfrac{X_n}{n} \Rightarrow Y$, The density function $f(y)$ of Y is given by
\begin{align*}
f(y)=f(y;{}^T[\alpha,\beta])=\{1-C(a,b;\alpha,\beta)y\}f_K(y;|a|),
\end{align*}
\begin{align*}
\mbox{where}\quad
&C(a,b;\alpha,\beta)=|\alpha|^2-|\beta|^2-\frac{a\alpha\overline{b\beta}+\overline{a\alpha}b\beta}{|a|^2},\\
&f_K(y;r)=\frac{\sqrt{1-r^2}}{\pi(1-y^2)\sqrt{r^2-y^2}}I_{(-r,r)}(y)\,\quad(0<r<1).
\end{align*}
\end{thm}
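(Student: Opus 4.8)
The plan is to reprove this known weak limit theorem for the two--state QW on $\ZM$ by running the Fourier argument of Section~2, specialised to the complex $2\times2$ coin. Write $M(\theta)=\mathrm{diag}(e^{i\theta},e^{-i\theta})\,U\in\mathrm{U}(2,\CM)$, the complex analogue of $U(\theta)$, so that, mirroring Proposition~\ref{TeProU} and Lemma~\ref{PhiProb}, $\hat\Phi_n(\theta)=M(\theta)^n\hat\Phi_0(\theta)$ with $\hat\Phi_0(\theta)\equiv{}^T[\alpha,\beta]$ constant in $\theta$ and $P(X_n=x)=\|\Phi_n(x)\|^2$. Since $M(\theta)$ is unitary with $\mathrm{tr}\,M(\theta)=ae^{i\theta}+de^{-i\theta}$ and $\det M(\theta)=\det U$ fixed on the unit circle, it has two eigenvalues $e^{i\lambda_+(\theta)},e^{i\lambda_-(\theta)}$; when $abcd\neq0$ the branches $\lambda_\pm$ and the orthogonal spectral projections $E_\pm$ are real--analytic on $[-\pi,\pi)$ off the finite set where $(\mathrm{tr}\,M)^2=4\det M$, near which $\lambda_\pm'$ and $E_\pm$ are at worst integrably singular. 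First I would record, from the explicit $2\times2$ spectral decomposition together with Lemma~\ref{unitary}, closed forms for the group velocities $\lambda_\pm'(\theta)$ and for the overlaps $\|E_\pm(\theta)\,{}^T[\alpha,\beta]\|^2$.

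The core of the argument is the method of moments. By Parseval on $\ZM$, multiplication by $x$ corresponds to $i\partial_\theta$ on the Fourier side with no boundary terms (for fixed $n$, $\hat\Phi_n$ is a trigonometric polynomial), so for every $r\ge1$
\begin{align*}
E[X_n^r]=\frac1{2\pi}\int_{-\pi}^{\pi}\bigl\langle\hat\Phi_n(\theta),(i\partial_\theta)^r\hat\Phi_n(\theta)\bigr\rangle\,d\theta .
\end{align*}
Substituting $\hat\Phi_n=M(\theta)^n\,{}^T[\alpha,\beta]$, writing $M(\theta)^n=e^{in\lambda_+}E_++e^{in\lambda_-}E_-$, and observing that each $i\partial_\theta$ falling on an exponential contributes a factor $-n\lambda_\pm'(\theta)$ while any $i\partial_\theta$ falling on $\lambda_\pm'$, on $E_\pm$, or on the mixed cross terms only lowers the power of $n$, one obtains
\begin{align*}
E\!\left[\Bigl(\frac{X_n}{n}\Bigr)^{r}\right]
=\frac1{2\pi}\int_{-\pi}^{\pi}\sum_{s=\pm}\bigl(-\lambda_s'(\theta)\bigr)^{r}\,\bigl\|E_s(\theta)\,{}^T[\alpha,\beta]\bigr\|^2\,d\theta+O\!\left(\tfrac1n\right),
\end{align*}
since the off--diagonal contributions $\langle E_s\,\cdot\,,E_{s'}\,\cdot\,\rangle$ with $s\neq s'$ vanish identically ($E_+E_-=0$), and the remaining $O(n^{r-1})$ pieces (including the oscillatory $e^{in(\lambda_+-\lambda_-)}$ ones) disappear after dividing by $n^r$; the integrability bounds near the degeneracy points make this uniform.

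It then remains to evaluate the limiting integral via the change of variables $y=-\lambda_\pm'(\theta)$ (the group velocity). For $0<|a|<1$ one checks that each branch sweeps the interval $(-|a|,|a|)$ monotonically over its range, that the Jacobian $|d\theta/dy|$ is exactly the one converting $\tfrac{d\theta}{2\pi}$ into $f_K(y;|a|)\,dy$, and that the eigenprojection overlaps $\|E_\pm(\theta)\,{}^T[\alpha,\beta]\|^2$ depend on $\theta$ through $y$ only affinely, contributing precisely the factor $1-C(a,b;\alpha,\beta)\,y$ with $C(a,b;\alpha,\beta)=|\alpha|^2-|\beta|^2-(a\alpha\overline{b\beta}+\overline{a\alpha}b\beta)/|a|^2$. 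Hence $E[(X_n/n)^r]\to\int_{-1}^{1}y^{r}f(y)\,dy$ for all $r\ge1$; since $f$ is supported in $(-|a|,|a|)\subset(-1,1)$ the moment problem is determinate, and the Fr\'echet--Shohat (method of moments) theorem yields $X_n/n\Rightarrow Y$ with density $f$.

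The hard part will be the bookkeeping in these last steps rather than anything conceptual: (i) making the ``only lowers the power of $n$'' estimate precise and uniform near the finitely many $\theta$ where the two spectral branches collide, where $E_\pm$ blows up; and (ii) carrying out the explicit $2\times2$ spectral computation and the change of variables so that the Jacobian reproduces $f_K(\,\cdot\,;|a|)$ and the eigenvector overlaps reproduce exactly the affine weight $1-C(a,b;\alpha,\beta)\,y$ --- this is where Lemma~\ref{unitary} must be used repeatedly to collapse the algebra. An equivalent but heavier route would be purely combinatorial: start from the explicit $P(X_n=\pm(n-2t))$ of Theorem~\ref{KonnoPD}(3), set $t=\lfloor n(1-y)/2\rfloor$, and extract the $n\to\infty$ asymptotics of the double sum over $\gamma,\delta$ (for instance through its Jacobi--polynomial representation and a saddle--point estimate); this produces the same $f$.
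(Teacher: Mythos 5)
Your proposal is correct and is essentially the route the paper itself takes: it states this theorem as a cited result of Konno and immediately proves/organizes it via the Grimmett--Janson--Scudo method --- Fourier transform, spectral decomposition of $\mathrm{diag}(e^{i\theta},e^{-i\theta})U$, the moment identity with $D=i\,d/d\theta$, and the change of variables $y=d\lambda/d\theta$ producing $f_K(y;|a|)$ and the affine weight $1-C(a,b;\alpha,\beta)y$ --- which is exactly your argument specialized to the $2\times 2$ complex coin (your combinatorial alternative is the proof in the original reference). The only cosmetic points: for $abcd\neq 0$ the two bands of the two-state walk never actually touch, so your degeneracy-point caveat is vacuous, and $\theta\mapsto\lambda_\pm'(\theta)$ is two-to-one rather than monotone onto $(-|a|,|a|)$, which only affects bookkeeping.
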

\noindent Here, $I_{(-r,r)}(y)=1$, if $y\in(-r,r)$, $=0$, otherwise. Then parameter $r$ means the range of support of the limit density function.
This weak limit theorem is also obtained by Grimmett, Janson, and Scudo \cite{GJS} via the Fourier transform, which is called the {\it GJS method} in this paper.
Here we apply the GJS method to QQWs by using $\hat{\Phi}_n(\theta)\in\CM^2$.\\ 
\indent We define eigenvalues of $U(\theta)$ as $e^{i\lambda_m(\theta)}$ $(m\in\{1,2,3,4\})$ with $\lambda_m(\theta)\in[-\pi,\pi)$, since $U(\theta)$ is unitary. Put the orthonormal eigenvectors $\ket{v_m(\theta)}$ associated with $e^{i\lambda_m}$. Using the spectral decomposition of $U(\theta)$, we have
\begin{align*}
U(\theta)^n=\sum^4_{m=1}\left(e^{i\lambda_m}\right)^n\ket{v_m(\theta)}\bra{v_m(\theta)}.
\end{align*}
Set $D=i\dfrac{d}{d\theta}$, the $r$-th moment of $X_n$ can be expressed as
\begin{align}
\label{rmoment}
E(X_n^r)=\int^\pi_{-\pi}\hat{\Phi}_n^*(\theta)D^r\hat{\Phi}_n(\theta)\frac{d\theta}{2\pi},
\end{align}
since $D\hat{\Phi}_n(\theta)=x\hat{\Phi}_n(\theta)$. Moreover $D^r\hat{\Phi}_n(\theta)$ becomes
\begin{align}
\label{order}
D^r
\hat{\Phi}_n(\theta)=D^rU(\theta)^n\hat{\Phi}_0(\theta)=\sum^4_{m=1}(n)_re^{i(n-r)\lambda_m}
\left(De^{i\lambda_m}\right)^r\bra{v_m}\hat{\Phi}_0(\theta)\ket{v_m}+O(n^{r-1}),
\end{align}
where $(n)_r=n(n-1)(n-2)\cdots(n-r+1)$ and $O(f(n))$ satisfies $\displaystyle\limsup_{n \to \infty}|O(f(n))/f(n)|\leq C$ for a finite  fixed number $C$.\\
By (\ref{order}) and (\ref{rmoment}), we get
\begin{align*}
E(X_n^r)=
\sum^4_{m=1}\int^\pi_{-\pi}
(n)_r\left(\frac{De^{\lambda_m}}{e^{i\lambda_m}}\right)^r
|\bra{v_m}\hat{\Phi}_0(\theta)|^2\frac{d\theta}{2\pi}+O(n^{r-1}).
\end{align*}
Then the limit of the above equation is
\begin{align*}
\lim_{n\to\infty}E\left(\left(\dfrac{X_n}{n}\right)^r\right)=E(Y^r)=
\sum^4_{m=1}\int^\pi_{-\pi}
\left(\frac{De^{\lambda_m}}{e^{i\lambda_m}}\right)^r
|\bra{v_m}\hat{\Phi}_0(\theta)|^2\frac{d\theta}{2\pi}.
\end{align*}
Finally, by the change of the variables $\theta\to y$, that is,
\begin{align*}
\frac{De^{\lambda_m}}{e^{i\lambda_m}}=\dfrac{d}{d\theta}\lambda_m=y,
\end{align*}
we get the limit density function. The support of the limit density function is determined by this change of variables.
Moreover, the characteristic polynomial of $U(\theta)$ is
\begin{align*}
|I\lambda-U(\theta)|=
\lambda^4-2(a_0 e^{i\theta}+d_0 e^{-i\theta})\lambda^3
+2\left\{
2a_0 d_0-\Re(bc)+|a|^2\cos(2\theta))
\right\}\lambda^2
-2(d_0 e^{i\theta}+a_0 e^{-i\theta})\lambda+1
\end{align*}
and the eigenvector is
\begin{align}
\label{evec}
\ket{v(\theta)}=
\begin{bmatrix}
|b|^2(1+C_1) \\[+5pt]
-|b|^2(C_2i+C_3)\\[+5pt]
-l
\left(
\left(
\ol{b}a
\right)
',-\ol{b'}
\right)(1+C_1)
-l
\left(
\left(
\ol{b}a
\right)'',b''
\right)
(C_2i+C_3)\\[+8pt]
-l
\left(\,
\ol{
\left(
\ol{b}a
\right)
''},\ol{b''}
\right)
(1+C_1)+l
\left(\,
\ol{
\left(
\ol{b}a
\right)'},-b'
\right)(C_2i+C_3)
\end{bmatrix}
,
\end{align}
where
\begin{align*}
&C=
\frac{1}{|B|^2}
\Im
\left(
2|b|^2\sin(\lambda-\theta)a+\sin(\lambda+\theta)T+ bc\sin(2\lambda) - b\ol{d^2}c \sin(2\theta)
\right), \quad
l(x,y)=x+ye^{i(\lambda-\theta)}.
\end{align*}
Here $T=bd\ol{b}+\ol{c}dc$ and $|B|^2$ is normalized coefficient for $|C|^2=1$. Concretely,
\begin{align*}
|B^2|=&|a|^2|b|^2
\left(
\sin^2(\lambda-\theta)+\sin^2(\lambda+\theta)
\right)
-2|b|^2
\left(
a_0\sin(\lambda+\theta)+d_0\sin(\lambda-\theta)
\right)\sin (2\lambda)
\\
&-2\Re(\ol{a^2} bc)\sin(\lambda-\theta)\sin(\lambda+\theta)
+|b|^2\sin^2 (2\lambda).
\end{align*}  
The details to get the eigenvector will be written in Appendix.

\subsection{Case 5 \label{mainresult}}
In Case 5, the coin operator $U$ is given by
\begin{align*}
&U=
\begin{bmatrix}
a & b \\
c & d
\end{bmatrix}
\in{\rm U}(2,\HM)
\quad
a_0=d_0=0.
\end{align*}
Then eigenvalues of $U(\theta)$ are
\begin{align*}
Spec(U(\theta))=\{e^{i\lambda}, -e^{i\lambda}, e^{-i\lambda}, -e^{-i\lambda}\},
\end{align*}
where
\begin{align*}
\cos\lambda=\sqrt{\dfrac{1-\Re(bc)+|a|^2\cos2\theta}{2}},\quad
\sin\lambda=\sqrt{\dfrac{1+\Re(bc)-|a|^2\cos2\theta}{2}}.
\end{align*}
Here the parameter of eigenvector associated with $e^{i\lambda}$ of $U(\theta)$ in (\ref{evec}) is
\begin{align*}
&C=\frac{1}{|B|^2}\Im
\left(
2|b|^2a \sin(\lambda-\theta)+T\sin(\lambda+\theta)+bc(\sin(2\lambda)+|a|^2\sin(2\theta)\,)
\right)
,\\
&|B|^2=2|a|^2\Re(bc)\cos(2\theta)+G-2|a|^4,
\end{align*}
where $G=1+|a|^4-\Re(bc)^2$.
Then $||\ket{v(\theta)}||^2=\dfrac{4|b|^4}{|B|^2}(1+C_1)(\sin(2\lambda)+|a|^2\sin(2\theta))\sin(2\lambda)$.
Now set $y=\dfrac{d}{d\theta}\lambda$, we get
\begin{align*}
y=\frac{|a|^2\sin(2\theta)}{\sqrt{1-\Re(bc)+|a|^2\cos(2\theta)}\sqrt{1+\Re(bc)-|a|^2\cos(2\theta)}}
\end{align*}
and
\begin{align*}
\cos(2\theta)=
\begin{cases}
\dfrac{-\Re(bc)y^2+\sqrt{y^4-Gy^2+|a|^4}}{|a|^2(1-y^2)}
\quad
\left(
\frac{-\Re(bc)r^2}{|a|^2(1-r^2)}\leq \cos (2\theta) \leq 1
\right)
\\[+16pt]
\dfrac{-\Re(bc)y^2-\sqrt{y^4-Gy^2+|a|^4}}{|a|^2(1-y^2)}
\quad
\left(
-1 \leq \cos(2\theta) < \frac{-\Re(bc)r^2}{|a|^2(1-r^2)}
\right)
\end{cases}
,
\end{align*}
where $r=\sqrt{\dfrac{G-\sqrt{G^2-4|a|^4}}{2}}=\dfrac{\sqrt{(1+|a|^2)^2-\Re(bc)^2}-\sqrt{(1-|a|^2)^2-\Re(bc)^2}}{2}$.
Above mentioned method gives the limit distribution of this class of QQW as
\\
\begin{thm}\quad For 
$U=\begin{bmatrix}a & b \\ c & d\end{bmatrix}\in{\rm U}(2,\HM)$ with $a_0=d_0=0$, 
$\displaystyle \lim_{n \to \infty}\dfrac{X_n}{n} \Rightarrow Y$, the density function $f(y)$ of Y is given by
\begin{align*}
f(y)=f(y;{}^T[\alpha,\beta])=\{1-C(a,b;\alpha,\beta)y\}f_{QQW}(y;r),
\end{align*}
\begin{align*}
&\mbox{where}\\
&
f_{QQW}(y;r)=
\frac{\sqrt{2}}{2\pi(1-y^2)\sqrt{r^2-y^2}}
\frac{\sqrt{(G-2)y^2+G-2|a|^4+(1-y^2)\sqrt{G^2-4|a|^4}}}
{\sqrt{\dfrac{G+\sqrt{G^2-4|a|^4}}{2}-y^2}}
I_{(-r,r)}(y),\\
&
C(a,b;\alpha,\beta)=|\alpha|^2-|\beta|^2-\frac{a\alpha\overline{b\beta}+\overline{a\alpha}b\beta}{|a|^2},
\\
&r=\sqrt{\dfrac{G-\sqrt{G^2-4|a|^4}}{2}}=\dfrac{\sqrt{(1+|a|^2)^2-\Re(bc)^2}-\sqrt{(1-|a|^2)^2-\Re(bc)^2}}{2}\mbox{ and } G=1+|a|^4-\Re(bc)^2.
\end{align*}
\end{thm}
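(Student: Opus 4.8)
\medskip\noindent\emph{Sketch.}\ The plan is to run the GJS moment method prepared in Section~\ref{Lpd}. From the computation there,
\begin{align*}
E(Y^{r})=\lim_{n\to\infty}E\!\left(\left(\frac{X_n}{n}\right)^{r}\right)=\sum_{m=1}^{4}\int_{-\pi}^{\pi}\left(\frac{d\lambda_m}{d\theta}\right)^{r}\bigl|\bra{v_m(\theta)}\hat{\Phi}_0(\theta)\rangle\bigr|^{2}\,\frac{d\theta}{2\pi},
\end{align*}
so it suffices to (i) diagonalise $U(\theta)$ in Case~5, (ii) evaluate the overlaps against the $\theta$-independent initial vector $\hat{\Phi}_0(\theta)={}^{T}[\alpha',\,\ol{\alpha''},\,\beta',\,\ol{\beta''}]$, (iii) change variables $\theta\mapsto y$, and (iv) read off the density from the Jacobian. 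Since $Y$ is bounded (its law is supported in $(-r,r)\subset(-1,1)$ because $abcd\neq0$ forces $r\le|a|<1$), it is moment-determinate; hence once the claimed $f(y)=\{1-C(a,b;\alpha,\beta)y\}f_{QQW}(y;r)$ is shown to have the moments produced above, the weak convergence $X_n/n\Rightarrow Y$ follows.

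For step (i), imposing $a_0=d_0=0$ annihilates the odd-degree terms of the characteristic polynomial of $U(\theta)$ displayed above, leaving a biquadratic whose roots are $\pm e^{\pm i\lambda}$ with $\cos\lambda$, $\sin\lambda$ as stated. Taking $\lambda_1=\lambda$, $\lambda_2=\lambda+\pi$, $\lambda_3=-\lambda$, $\lambda_4=-\lambda+\pi$ (moved into $[-\pi,\pi)$), one gets $\frac{d}{d\theta}\lambda_1=\frac{d}{d\theta}\lambda_2=y$ and $\frac{d}{d\theta}\lambda_3=\frac{d}{d\theta}\lambda_4=-y$, so the four summands collapse into two groups carrying $+y$ and $-y$. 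The eigenvector for $e^{i\lambda}$ is the Case~5 specialisation of $(\ref{evec})$, with the $C$, $|B|^{2}$ and $||\ket{v(\theta)}||^{2}$ recorded in Section~\ref{mainresult}; the remaining three eigenvectors follow from it by $\lambda\mapsto\lambda+\pi$ together with the conjugation symmetry $U(-\theta)=\ol{U(\theta)}$ (up to a fixed permutation of coordinates), which is also the reason that only $|y|$ and the \emph{sums} of overlaps within each $\pm y$ group enter the moments.

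Step (ii) is the core. Using Lemma~\ref{unitary}, and in particular $|a|^{2}=|d|^{2}$, $|b|^{2}=|c|^{2}$ and $a_0=d_0=0$, one expands $\bigl|\bra{v_m(\theta)}\hat{\Phi}_0(\theta)\rangle\bigr|^{2}$ for each eigenvector and adds the two members of each group. I expect the $\alpha,\beta$-dependence to separate into a $\theta$-even part that is independent of the coin state and a part proportional to $C(a,b;\alpha,\beta)=|\alpha|^{2}-|\beta|^{2}-(a\alpha\,\ol{b\beta}+\ol{a\alpha}\,b\beta)/|a|^{2}$, exactly as in the complex QW: writing $w_{\pm}(\theta)$ for the summed overlaps of the $\pm y$ groups, the target identity is $w_{\pm}(\theta)=w(\theta)\bigl(1\mp C(a,b;\alpha,\beta)\,y\bigr)$ for a single even weight $w(\theta)$, whence $E(Y^{r})=\int_{-\pi}^{\pi}y^{r}\bigl(1-C(a,b;\alpha,\beta)y\bigr)\,w(\theta)\,\frac{d\theta}{2\pi}$ for all $r$. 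Here $w(\theta)$ should come out as $||\ket{v(\theta)}||^{-2}$ times a combination of the entries of $(\ref{evec})$ that simplifies, via $||\ket{v(\theta)}||^{2}=\tfrac{4|b|^{4}}{|B|^{2}}(1+C_1)(\sin2\lambda+|a|^{2}\sin2\theta)\sin2\lambda$ and the Case~5 value of $|B|^{2}$, to a function of $\theta$ depending only on $\cos2\theta$.

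For steps (iii)--(iv), invert $y=\dfrac{|a|^{2}\sin2\theta}{\sqrt{1-\Re(bc)+|a|^{2}\cos2\theta}\,\sqrt{1+\Re(bc)-|a|^{2}\cos2\theta}}$ for $\cos2\theta$: squaring produces a quadratic in $\cos2\theta$ whose two solutions are precisely the two branches displayed in Section~\ref{mainresult}, valid for $y\in(-r,r)$, where $r$ is the positive value at which the discriminant $y^{4}-Gy^{2}+|a|^{4}$ vanishes, i.e. where $dy/d\theta=0$ and the branches meet; the larger root of the same quadratic in $y^{2}$ is the quantity under the denominator square root of $f_{QQW}$. Computing $d\theta/dy$ on each branch and each sign, and summing the (generically four) preimages in $(-\pi,\pi)$ of a given $y$, turns $w(\theta)\,\frac{d\theta}{2\pi}$ into $f_{QQW}(y;r)\,dy$, while the factor $1-C(a,b;\alpha,\beta)y$ passes through untouched; this yields $f(y)=\{1-C(a,b;\alpha,\beta)y\}f_{QQW}(y;r)$ on $(-r,r)$ and $0$ otherwise. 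The principal obstacle is exactly the bookkeeping in (ii)--(iii): grinding the four eigenvector overlaps down to the single scalar $C(a,b;\alpha,\beta)$ by repeated use of the unitarity relations, and then tracking the multivalued inverse $\theta(y)$---two $\cos2\theta$ branches plus sign ambiguities---so that the Jacobian contributions sum to $f_{QQW}$. A reassuring check is the degenerate case $bc\in\RM$ (where $\Re(bc)=\pm|b|^{2}$ and $G^{2}-4|a|^{4}=0$): there $r=|a|$ and $f_{QQW}(y;|a|)$ collapses to $f_K(y;|a|)$, recovering the complex-QW limit law stated at the start of Section~\ref{Lpd}.
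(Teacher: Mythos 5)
Your proposal follows exactly the paper's route: the GJS moment method set up in Section~\ref{Lpd}, specialised to Case~5 via the eigenvalues $\pm e^{\pm i\lambda}$, the eigenvector (\ref{evec}) with its Case~5 values of $C$, $|B|^2$ and $||\ket{v(\theta)}||^2$, followed by the change of variables $y=d\lambda/d\theta$ and inversion of the resulting quadratic in $\cos2\theta$. The outline, including the grouping of the four eigenvalue branches into $\pm y$ pairs and the consistency check against $f_K$, matches the paper's argument, so there is nothing substantively different to flag.
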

\noindent
Furthermore we can easily check that if $\Re(bc)=0$, then $f_{QQW}(y;r)$ corresponds to limit density function of QW, $f_{K}(y;|a|^2)$. However we remark that the support of limit distribution of such a case, $|a|^2$, is different from that of QW, $|a|$.

\par
\
\par

\begin{small}
\bibliographystyle{jplain}

\section*{Appendix}
We consider eigenvector $\ket{v(\theta)}$ of $U(\theta)$ associated with eigenvalue $e^{i\lambda}$. 
Let $U=\begin{bmatrix}a & b \\ c& d \end{bmatrix}\in{\rm U}(2,\HM)$ and
\begin{align*}
U(\theta)=\begin{bmatrix} e^{i\theta} & 0 & 0 & 0 \\ 0 & e^{i\theta} & 0 & 0 \\ 0 & 0 & e^{-i\theta} & 0 \\ 0 & 0 & 0 & e^{-i\theta}\end{bmatrix}\begin{bmatrix}\chi(a) & \chi(b) \\ \chi(c) & \chi(d)\end{bmatrix}\in{\rm U}(4,\CM),
\end{align*}
where $\chi(x)=\begin{bmatrix} x' & -x'' \\ \ol{x''} & \ol{x'} \end{bmatrix}\in{\rm M}(2,\CM)$ with $x=x'+x''j\in\HM\ $ ($x'=x_0+x_1i$, $x''=x_2+x_1i\in\CM $).
By definition, we have
\begin{align}
\label{Ueigen1}
\begin{bmatrix}
e^{i\theta} & 0 & 0 & 0 \\
0 & e^{i\theta} & 0 & 0 \\
0 & 0 & e^{-i\theta} & 0 \\
0 & 0 & 0 & e^{-i\theta}
\end{bmatrix}
\begin{bmatrix}
\chi(a) & \chi(b) \\
\chi(c) & \chi(d)
\end{bmatrix}
\ket{v(\theta)}
=e^{i\lambda}
\ket{v(\theta)}.
\end{align}
Then (\ref{Ueigen1}) is equivalent to
\begin{align}
\label{Ueigen2}
\begin{bmatrix}
\chi(a) & \chi(b) \\
\chi(c) & \chi(d)
\end{bmatrix}
\begin{bmatrix}
v_1 & -v_2 \\
\ol{v_2} & \ \  \ol{v_1} \\
v_3 & -v_4 \\
\ol{v_4} & \ \  \ol{v_3}
\end{bmatrix}
=
\begin{bmatrix}
e^{i(\lambda-\theta)}v_1 & -e^{-i(\lambda-\theta)}v_2\\
e^{i(\lambda-\theta)}\ol{v_2} & e^{-i(\lambda-\theta)}\ol{v_1}\\
e^{i(\lambda+\theta)}v_3 & -e^{-i(\lambda+\theta)}v_4\\
e^{i(\lambda+\theta)}\ol{v_4} & e^{-i(\lambda+\theta)}\ol{v_3}
\end{bmatrix}
\end{align}
with $\ket{v(\theta)}={}^T\begin{bmatrix} v_1 & \ol{v_2} & v_3 & \ol{v_4}\end{bmatrix}$.
Since, each submatrix of (\ref{Ueigen2}) is the quaternionic expansion of a quaternion derived from (\ref{Ueigen1}).
Here we can see
\begin{align*}
\begin{bmatrix}
\chi(a) & \chi(b) \\
\chi(c) & \chi(d)
\end{bmatrix}
\begin{bmatrix}
\chi(s)\\
\chi(t)
\end{bmatrix}
=
\begin{bmatrix}
\chi(\tilde{s})\\
\chi(\tilde{t})
\end{bmatrix},
\end{align*}
where 
\begin{align*}
s=v_1+v_2j,\quad \tilde{s}=e^{i(\lambda-\theta)}v_1+e^{-i(\lambda-\theta)}v_2j,\quad 
t=v_3+v_4j,\quad \tilde{t}=e^{i(\lambda+\theta)}v_3+e^{-i(\lambda+\theta)}v_4j. 
\end{align*}
So we get
\begin{align}
\label{Ueigen3}
\begin{bmatrix}
a & b\\
c  & d
\end{bmatrix}
\begin{bmatrix}
s \\
t
\end{bmatrix}
=
\begin{bmatrix}
\tilde{s}\\
\tilde{t}
\end{bmatrix}.
\end{align}
Therefore we should consider quaternions $s$, $t$ satisfying (\ref{Ueigen3}) so as to get eigenvector $\ket{v(\theta)}={}^T\begin{bmatrix} s' & \ol{s''} & t' &\ol{t''}\end{bmatrix}$. \\
Furthermore $\tilde{s}$ and $\tilde{t}$ are expressed by $s$ and $t$ as follows, respectively.
\begin{align*}
\tilde{s}=e^{i(\lambda-\theta)}v_1+e^{-i(\lambda-\theta)}v_2j=e^{i(\lambda-\theta)}v_1+v_2je^{i(\lambda-\theta)}=se^{i(\lambda-\theta)}\\
\tilde{t}=e^{i(\lambda+\theta)}v_3+e^{-i(\lambda+\theta)}v_4j=e^{i(\lambda+\theta)}v_3+v_4je^{i(\lambda+\theta)}=te^{i(\lambda+\theta)}
\end{align*}
Because of the existence of nontrivial eigenvector, we have $st\neq0$. Then (\ref{Ueigen3}) gives the following relation of $s$ and $t$.
\begin{align}
\label{As}
(1)\quad As-Bsi=0,\hspace{2cm}
(2)\quad t=\frac{\ol{b}}{|b|^2}(se^{i(\lambda-\theta)}-as),
\end{align}
where 
\begin{align*}
A=c+d\ol{b}\cos(\lambda-\theta)+\ol{b}a\cos(\lambda+\theta)-\ol{b}\cos(2\lambda),\qquad
B=-d\ol{b}\sin(\lambda-\theta)-\ol{b}a\sin(\lambda+\theta)+\ol{b}\sin2\lambda.
\end{align*}
As for (1), if $AB=0$, then (1) holds for any $s$. Hence from now on, we assume that $AB\neq 0$. Put $C=B^{-1}A$, so we have $C=sis^{-1}$ by (1). Then we see that $|C|=1$. Moreover $\Re(x)=\Re(yxy^{-1})\ (x,y\in\HM)$ implies $\Re(C)=\Re(k)=0$. By using these, we obtain
\begin{align*}
C=
\frac{1}{|B|^2}
\Im
\left(
2|b|^2\sin(\lambda-\theta)a+\sin(\lambda+\theta)T+ bc\sin(2\lambda) - b\ol{d^2}c \sin(2\theta)
\right)
\end{align*}
and
\begin{align*}
|B^2|=&|a|^2|b|^2
\left(
\sin^2(\lambda-\theta)+\sin^2(\lambda+\theta)
\right)
-2|b|^2
\left(
a_0\sin(\lambda+\theta)+d_0\sin(\lambda-\theta)
\right)\sin (2\lambda)
\\
&-2\Re(\ol{a^2} bc)\sin(\lambda_m-\theta)\sin(\lambda_m+\theta)
+|b|^2\sin^2 (2\lambda_m).
\end{align*}
Here we use the following result based on Tian (1999)\cite{Quaternion1}.
\\
\begin{thm*}
\label{1kai}
\quad For $a,b,c\in\HM$, $x\in\HM$ satisfying $ax-xb=c$ is given by\\[+18pt]
(1)
\quad  If $a_0\neq b_0$ or $|\Im(a)|\neq|\Im(b)|$, then
\begin{align*}
x&=(a^2-2b_0a+|b|^2)^{-1}(ac-c\overline{b})=\{2(a_0-b_0)a+|b|^2-|a|^2\}^{-1}(ac-c\overline{b})\\[+8pt]
&=(cb-\overline{a}c)(b^2-2a_0b+|a|^2)^{-1}=(cb-\overline{a}c)\{2(a_0-b_0)b+|b|^2-|a|^2\}^{-1}.
\end{align*}
(2)
\quad If $a_0=b_0$, $|\Im(a)|=|\Im(b)|$ and $c=0$, then
\begin{align*}
x=p-\frac{1}{|\Im(a)||\Im(b)|}\Im(a)p\Im(b) \hspace{1.5cm} (p\in\HM).
\end{align*}
\\
(3)
\quad If $a_0=b_0$, $|\Im(a)|=|\Im(b)|$ and $c\neq0$, then
\begin{align*}
x=\frac{1}{4|a|^2}(cb-ac)-\frac{1}{|\Im(a)||\Im(b)|}\Im(a)p\Im(b) \hspace{1.5cm} (p\in\HM).
\end{align*}
\end{thm*}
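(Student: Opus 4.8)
The plan is to regard the equation $ax-xb=c$ not as a quaternionic identity but as a single $\RM$-linear equation on $\HM\cong\RM^4$. First I would introduce the left and right multiplication operators $L_a(x)=ax$ and $R_b(x)=xb$, which are $\RM$-linear on $\HM$ and, crucially, commute with one another (left and right multiplications always commute). The equation becomes $(L_a-R_b)x=c$, so the whole problem reduces to inverting, or describing the kernel of, the single operator $L_a-R_b$.

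The engine of the argument is a factorization identity. Using $R_b+R_{\ol b}=2b_0 I$ and $R_bR_{\ol b}=R_{\ol b b}=|b|^2 I$ together with $L_aR_{\ol b}=R_{\ol b}L_a$, a direct expansion gives
\begin{align*}
(L_a-R_b)(L_a-R_{\ol b})=L_a^2-2b_0L_a+|b|^2I=L_{a^2-2b_0a+|b|^2},
\end{align*}
since $x\mapsto L_x$ is an algebra homomorphism. The right-hand side is left multiplication by the single quaternion $g:=a^2-2b_0a+|b|^2$, and by the minimal polynomial $a^2=2a_0a-|a|^2$ one rewrites $g=2(a_0-b_0)a+|b|^2-|a|^2$, a real-linear combination of $1$ and $a$. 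Then I would apply $L_a-R_{\ol b}$ to both sides of $(L_a-R_b)x=c$, using that the two factors commute, to obtain $g\,x=ac-c\ol b$. Whenever $g\neq0$ (equivalently $L_g$ invertible) this yields $x=g^{-1}(ac-c\ol b)$, which is exactly the first displayed formula of part (1); the symmetric manipulation, multiplying instead by $L_{\ol a}-R_b$ and using $h:=b^2-2a_0b+|a|^2$ with $(L_{\ol a}-R_b)(L_a-R_b)=R_h$, produces the right-factored forms.

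Next I would pin down when $g\neq0$. Since $\Im(g)=2(a_0-b_0)\Im(a)$, the condition $g=0$ forces $a_0=b_0$ or $\Im(a)=0$, and a short case analysis shows $g=0$ exactly when $a_0=b_0$ and $|\Im(a)|=|\Im(b)|$; equivalently, $L_a-R_b$ is invertible precisely under the hypothesis of part (1). This is also transparent spectrally: over $\CM$ the operator $L_a$ has eigenvalues $a_0\pm i|\Im(a)|$ and $R_b$ has $b_0\pm i|\Im(b)|$, so the commuting difference $L_a-R_b$ is singular iff these sets meet, i.e. iff $a$ and $b$ are similar quaternions. In the remaining (similar) case $g=h=0$, so $L_a-R_b$ is singular and one must describe its kernel. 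For part (2), with $c=0$, I would verify by direct substitution that $x=p-\frac{1}{|\Im(a)||\Im(b)|}\Im(a)p\Im(b)$ solves the reduced homogeneous equation $\Im(a)x=x\Im(b)$ (here $a_0=b_0$ cancels): setting $u=\Im(a)$, $w=\Im(b)$, $\mu=|u|=|w|$ and using $u^2=w^2=-\mu^2$ gives $ux=up+pw=xw$, and the same relations show $p\mapsto p-\mu^{-2}upw$ maps onto the full two-dimensional kernel. For part (3), $c\neq0$, the general solution is this kernel term plus the stated particular solution $\frac{1}{4|a|^2}(cb-ac)$, checked by substituting and collapsing $a^2,b^2$ via their (now common) minimal polynomials.

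I expect the main obstacle to be part (3). When $a$ and $b$ are similar, $L_a-R_b$ has a nontrivial kernel, so $ax-xb=c$ is solvable only for $c$ in its proper range; applying the factorization to a hypothetical solution forces the compatibility conditions $ac=c\ol b$ and $\ol a c=cb$, and the verification that $\frac{1}{4|a|^2}(cb-ac)$ is a genuine particular solution must invoke exactly these conditions. Getting the bookkeeping of real parts, norms, and the cancellations right in this degenerate case — rather than the clean invertible case (1) — is where the care is needed; parts (1) and (2), by contrast, follow cleanly from the factorization identity and a one-line substitution respectively.
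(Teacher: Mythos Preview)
The paper does not prove this theorem; it simply quotes it as ``the following result based on Tian (1999)'' and then applies part~(2). So there is no in-paper argument to compare your proposal against, and your factorization approach via the commuting operators $L_a,R_b$ and the identity $(L_a-R_b)(L_a-R_{\ol b})=L_{a^2-2b_0a+|b|^2}$ is exactly the standard route (and essentially Tian's). Your treatment of parts (1) and (2) is correct: the factorization gives the unique solution when $g=a^2-2b_0a+|b|^2\neq0$, your analysis of when $g=0$ is right, and the one-line substitution for (2) checks out.

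The gap is in part~(3), precisely where you said the bookkeeping is delicate --- but you then asserted the verification goes through, and it does not for the formula as printed. Under the hypotheses $a_0=b_0$, $|\Im(a)|=|\Im(b)|$ (hence $|a|=|b|$) and the compatibility conditions $ac=c\ol b$, $\ol a c=cb$ that you correctly derived, put $x_0=\frac{1}{4|a|^2}(cb-ac)$. Using $a^2=2a_0a-|a|^2$, $b^2=2a_0b-|a|^2$, $acb=c\ol b\,b=|a|^2c$ and $ac+cb=c\ol b+cb=2a_0c$, one gets
\[
ax_0-x_0b=\frac{1}{4|a|^2}\bigl(2acb-a^2c-cb^2\bigr)=\frac{1}{4|a|^2}\bigl(4|a|^2-4a_0^2\bigr)c=\frac{|\Im(a)|^2}{|a|^2}\,c,
\]
which equals $c$ only when $a_0=0$. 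A concrete counterexample: $a=b=1+i$, $c=j$ (so the equation is $ix-xi=j$, with solution $x=-k/2$); the printed formula gives $x_0=\frac{1}{8}(j(1+i)-(1+i)j)=-k/4$, and $(1+i)(-k/4)-(-k/4)(1+i)=j/2\neq j$. The homogeneous term written in (3), $-\frac{1}{|\Im(a)||\Im(b)|}\Im(a)p\Im(b)$, is also not the kernel parametrization from (2); as a map of $p$ it is a bijection of $\HM$, so it cannot by itself describe the two-dimensional solution set of the homogeneous equation. In short, the statement of (3) in the paper appears to carry a transcription error (the coefficient should be $1/(4|\Im(a)|^2)$ and the homogeneous part should match (2)); your outline is the right one, but the ``checked by substituting'' step actually exposes this discrepancy rather than confirming the printed formula.
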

\noindent By applying (2) in this theorem for $Cs+sk=0$, we get
\begin{align*}
s=p-Cpi \hspace{1cm} (p\in\HM/\{0\}).
\end{align*}
Then we have simplex and perplex parts of $s$ and $t$:
\begin{align*}
s=&
\left\{
(1+C_1)p'+(C_2i-C_3)\ol{p''}
\right\}
+
\ol
{
\left\{
-(C_2i+C_3)p'+(1-C_1)\ol{p''}
\right\}
}j,
\\
t=&
\frac{1}{|b|^2}
{\bigg [}
\left\{
-l
\left(
\left(
\ol{b}a
\right)
',-\ol{b'}
\right)(1+C_1)
-l
\left(
\left(
\ol{b}a
\right)'',b''
\right)
(C_2i+C_3)
\right\}
p'
\\
&
\hspace{2.0 cm}
+
\left\{
l
\left(
\left(
\ol{b}a
\right)
'', b''
\right)(1-C_1)
-l
\left(
\left(
\ol{b}a
\right)', -\ol{b'}
\right)
(C_2i-C_3)
\right\}
\ol{p''}\ 
{\bigg ]}
\\
&
+
\frac{1}{|b|^2}
{\bigg [}\ 
\ol{
\left\{
-l
\left(\,
\ol{
\left(
\ol{b}a
\right)
''},\ol{b''}
\right)
(1+C_1)+l
\left(\,
\ol{
\left(
\ol{b}a
\right)'},-b'
\right)(C_2i+C_3)
\right\}p'
}
\\
&
\hspace{2.0 cm}
+
\ol{
\left\{
-l
\left(\,
\ol{
\left(
\ol{b}a
\right)
'},-b'
\right)
(1-C_1)-l
\left(\,
\ol{
\left(
\ol{b}a
\right)''},\ol{b''}
\right)(C_2i-C_3)
\right\}
\ol{p''}
}
\ {\bigg ]}
j.
\end{align*}
where $p=p'+p''j$. Therefore eigenvector $\ket{v(\theta)}={}^T\begin{bmatrix} s' & \ol{s''} & t' & \ol{t''}\end{bmatrix}$ is expressed as $C,\  l(x,y)=x+ye^{i(\lambda-\theta)}$, and $p$ in the following:
\begin{align*}
\ket{v_m(\theta)}=
\frac{1}{|b|^2}
\begin{bmatrix}
|b|^2(1+C_1) \\[+5pt]
-|b|^2(C_2i+C_3)\\[+5pt]
-l
\left(
\left(
\ol{b}a
\right)
',-\ol{b'}
\right)(1+C_1)
-l
\left(
\left(
\ol{b}a
\right)'',b''
\right)
(C_2i+C_3)\\[+8pt]
-l
\left(\,
\ol{
\left(
\ol{b}a
\right)
''},\ol{b''}
\right)
(1+C_1)+l
\left(\,
\ol{
\left(
\ol{b}a
\right)'},-b'
\right)(C_2i+C_3)
\end{bmatrix}
p'
\\
+
\frac{1}{|b|^2}
\begin{bmatrix}
|b|^2(C_2i-C_3) \\[+5pt]
|b|^2(1-C_1)\\[+5pt]
l
\left(
\left(
\ol{b}a
\right)
'', b''
\right)(1-C_1)
-l
\left(
\left(
\ol{b}a
\right)', -\ol{b'}
\right)
(C_2i-C_3)\\[+8pt]
-l
\left(\,
\ol{
\left(
\ol{b}a
\right)
'},-b'
\right)
(1-C_1)-l
\left(\,
\ol{
\left(
\ol{b}a
\right)''},\ol{b''}
\right)(C_2i-C_3)
\end{bmatrix}
\ol{p''}
.
\end{align*}
Let $p'=|b|^2$ and $\ol{p''}=0$, so we get the eigenvector of $U(\theta)$ in (\ref{evec}).

\end{small}

\end{document}